\spnewtheorem{thm}[theorem]{Theorem}{\bfseries}{\itshape}
\spnewtheorem{cor}[theorem]{Corollary}{\bfseries}{\itshape}
\spnewtheorem{lem}[theorem]{Lemma}{\bfseries}{\itshape}
\spnewtheorem{prop}[theorem]{Proposition}{\bfseries}{\itshape}
\spnewtheorem{defn}[theorem]{Definition}{\bfseries}{\upshape}
\spnewtheorem{rem}[theorem]{Remark}{\bfseries}{\upshape}
\spnewtheorem{expl}[theorem]{Example}{\bfseries}{\upshape}
\spnewtheorem{thmdefn}[theorem]{Theorem and Definition}{\bfseries}{\itshape}
\spnewtheorem{propdefn}[theorem]{Proposition and Definition}{\bfseries}{\itshape}
\spnewtheorem{assumption}[theorem]{Assumption}{\bfseries}{\upshape}
\spnewtheorem{algorithm}[theorem]{Algorithm}{\bfseries}{\upshape}
\newcommand{\SG}[2][]{\ed@note{#2}{SG}{#1}}
\newcommand{\LS}[2][]{\ed@note{#2}{LS}{#1}}
\setlist{topsep=1.2ex,itemsep=1.2ex}
\newcommand{\Cat}{\mathbf}
\newcommand{\BC}{{\Cat C}}
\newcommand{\BBT}{\mathbb{T}}
\newcommand{\Set}{\Cat{Set}}
\newcommand{\Cpo}{\Cat{Cpo}}
\newcommand{\Opname}{\mathrm}
\newcommand{\Ob}{{\Opname{Ob}\,}}
\newcommand{\id}{\operatorname{id}}
\newcommand{\PSet}{{\mathcal P}}
\newcommand{\True}{\top}
\newcommand{\False}{\bot}
\newcommand{\impl}{\Rightarrow}
\newcommand{\nil}{\varnothing}
\newcommand{\comp}{\mathbin{\circ}}
\newcommand{\gris}{\mathrel{::=}}
\newcommand{\argument}{\_\!\_}
\newcommand{\unit}{\star}
\newcommand{\fst}{\operatorname{\sf fst}}
\newcommand{\snd}{\operatorname{\sf snd}}
\newcommand{\inl}{\operatorname{\sf inl}}
\newcommand{\inr}{\operatorname{\sf inr}}
\newcommand{\inj}{\operatorname{\sf inj}}
\newcommand{\CASE}{\operatorname{\sf case}}
\newcommand{\OF}{\operatorname{\sf of}}
\def\case{}\fi
\renewcommand{\case}[3]{\CASE\kern1.2pt #1\kern1.2pt \OF\kern1.2pt #2;\kern1.2pt #3}
\newcommand{\caseOne}[2]{\CASE\kern1.2pt #1\kern1.2pt \OF\kern1.2pt #2}
\newcommand{\DO}{\operatorname{\sf do}}
\newcommand{\retOp}{\operatorname{\sf ret}}
\newcommand{\ret}[1]{\retOp #1}
\newcommand{\letTerm}[2]{\DO\kern1.2pt#1; #2}
\newcommand{\leteq}{\gets}
\newcommand{\letret}{\colonequals} 
\newcommand{\letTermO}[2]{\DO_{\nu}\kern1.2pt#1; #2}
\newcommand{\LOOP}{\operatorname{\sf unfold}}
\newcommand{\THEN}{\operatorname{\sf then}}
\long\def\loopTerm@[#1][#2][#3]#4#5#6{
\DO\kern1.2pt#4 #1 \LOOP #3 #5 #2\THEN #6
}
\newcommand{\loopTerm}{
\optparams{\loopTerm@}{[;][\};][\{]}
}
\newcommand{\LET}{\operatorname{\sf init}}
\newcommand{\letLoop}[2]{\LET #1\kern1.2pt\LOOP\kern1.2pt\{#2\}}
\newcommand{\LetLoop}[2]{\LET #1\kern1.2pt\LOOP\kern1.2pt\bigl\{#2\bigr\}}
\newcommand{\fold}{\operatorname{\sf tuo}}
\newcommand{\unfold}{\operatorname{\sf out}}
\newcommand{\step}[1]{\hm{[} #1 \hm{]}}
\newcommand{\dispTerm}[2]{\operatorname{\sf disp}\kern1.2pt#1; #2}
\newcommand{\EE}{\scriptscriptstyle E}
\newcommand{\rcont}{\operatorname{\sf cont}}
\newcommand{\rdone}{\operatorname{\sf stop}}
\newcommand{\cont}{\operatorname{\sf rest}}
\newcommand{\done}{\operatorname{\sf done}}
\newcommand{\NEXT}{\operatorname{\sf next}}
\newcommand{\IS}{\operatorname{\sf is}}
\newcommand{\casenext}[3]{\NEXT\kern1.2pt #1\kern1.2pt \IS\kern1.2pt #2;\kern1.2pt #3}
\newcommand{\filter}{\operatorname{\sf filter}}
\newcommand{\await}{\operatorname{\sf await}}
\newcommand{\inv}{\operatorname{\sf inv}}
\newcommand{\exec}{\operatorname{\sf exec}}
\newcommand{\IF}{\operatorname{\sf if}}
\newcommand{\ifTerm}[3]{\IF #1\kern2.2pt {\sf then}\kern1.2pt #2\kern2.2pt {\sf else}\kern2.2pt #3}
\newcommand{\ifTermO}[3]{\IF_{\nu} #1\kern2.2pt {\sf then}\kern2.2pt #2\kern2.2pt {\sf else}\kern2.2pt #3}
\newcommand{\WHILE}{\operatorname{\sf while}}
\newcommand{\whileTerm}[3]{\LET #1\kern1.2pt\WHILE\kern1.2pt #2 \kern2.2pt{\sf do}\kern2.2pt #3}
\newcommand{\whileTermS}[2]{\WHILE\kern1.2pt #1 \kern2.2pt{\sf do}\kern2.2pt #2}
\newcommand{\UNTIL}{\operatorname{\sf until}}
\newcommand{\untilTerm}[3]{\LET #1\kern2.2pt{\sf do}\kern2.2pt #2\kern1.2pt\UNTIL #3}
\newcommand{\untilTermS}[2]{{\sf do}\kern2.2pt #1\kern1.2pt\UNTIL #2}
\newcommand{\SEQ}{\operatorname{\sf seq}}
\newcommand{\seqTerm}[2]{\SEQ\kern1.2pt#1; #2}
\newcommand{\lrule}[3]{\textbf{(#1)}\;\;\infrule{#2}{#3}}
\newcommand{\lrrule}[3]{\textbf{(#1)}\;\;\infrule{\underline{#2}}{#3}}
\newcommand{\infrule}[2]{\frac{#1}{#2}}
\newcommand{\lsem}{\llbracket}
\newcommand{\rsem}{\rrbracket}
\newcommand{\sem}[1]{\lsem #1 \rsem}
\newcommand{\brks}[1]{\langle #1\rangle}
\newcommand{\Brks}[1]{\bigl\langle #1\bigr\rangle}
\newcommand{\ana}[1]{\digamma_{\!\! #1}}
\newcommand{\fin}[1]{\alpha_{\scriptscriptstyle #1}}
\newlength{\myboxwidth}
\newenvironment{myfigure}{
\begin{figure*}[t]\begin{center}
\setlength{\fboxsep}{10pt}
}{
\end{center}\end{figure*}
}
\renewcommand{\ll}{\mathbin{\parallel}} 
\DeclareMathOperator{\lmOp}{\llfloor}
\newcommand{\lm}{\lmOp}
\DeclareMathOperator{\lsmOp}{\lfloor\kern0.6pt}
\newcommand{\lsm}{\lsmOp}
\DeclareMathOperator{\rsmOp}{\rfloor\kern0.6pt}
\newcommand{\rsm}{\rsmOp}
\newcommand*{\smallsub}[1]{
\mathchoice
 {\displaystyle{#1}}
 {\textstyle{#1}}
 {\scriptscriptstyle{#1}}
 {\scriptscriptstyle{#1}}
}
\newcommand{\nuM}{{\ensuremath{\smallsub{\mathsf{ME_{\boldsymbol\nu}}}}}\xspace}
\DeclareMathOperator{\redpar}{{
\declareslashed{}{
\vrule height2pt depth 2pt
\kern1pt
\vrule height2pt depth 2pt
}{0}{0}{\rightarrowtail}\slashed{\rightarrowtail}}}
\newcommand{\klstar}{\star}  					
\newcommand{\klcomp}{\mathbin{\diamond}}  		
\newcommand{\Vars}{\mathrm{Vars}}
\newcommand{\setflag}{\mathrm{set\_flag}}
\newcommand{\getflag}{\mathrm{get\_flag}}
\newcommand{\setturn}{\mathrm{set\_turn}}
\newcommand{\turnis}{\mathrm{turn\_is}}
\newcommand{\incs}{\mathrm{in\_cs}}
\newcommand{\outcs}{\mathrm{out\_cs}}
\newcommand{\cs}{\mathrm{cs}}
\newcommand{\flip}{\mathrm{flip}}
\def\todo{}
  \renewcommand{\todo}[1]{
  {\color{red}{[TODO: #1]}}
  }
  \renewcommand{\todo}[1]{
  {\color{red}{[TODO: #1]}}
  }
\newcounter{blubber}
\newenvironment{myitemize}
{\begin{itemize}
\setlength{\itemsep}{0cm}
\setlength{\parsep}{0cm}
}
{\end{itemize}}
\def\subsection{%
\@startsection{subsection} 
{2} 
{0 pt} 
{2 pt plus 3 pt minus 4 pt}
{0 pt}
{\normalsize\bfseries}}
\title{A Coinductive Calculus for\\Asynchronous Side-effecting Processes}
\author{Sergey Goncharov\and Lutz Schr\"oder}
\institute{Safe and Secure Cognitive Systems, DFKI GmbH, Bremen}
\begin{document}
\maketitle
\begin{abstract}
  We present an abstract framework for concurrent processes in which
  atomic steps have generic side effects, handled according to the
  principle of monadic encapsulation of effects. Processes in this
  framework are potentially infinite resumptions, modelled using final
  coalgebras over the monadic base. As a calculus for such processes,
  we introduce a concurrent extension of Moggi's monadic metalanguage
  of effects. We establish soundness and completeness of a natural
  equational axiomatisation of this calculus. Moreover, we identify a
  corecursion scheme that is explicitly definable over the base
  language and provides flexible expressive means for the definition
  of new operators on processes, such as parallel composition. As a
  worked example, we prove the safety of a generic mutual exclusion
  scheme using a verification logic built on top of the equational
  calculus.
\end{abstract}

\section{Introduction}
Imperative programming languages work with many different side
effects, such as I/O, state, exceptions, and others, which all
moreover come with strong degrees of variations in detail. This
variety is unified by the principle of monadic encapsulation of
side-effects~\cite{Moggi91}, which not only underlies extensive work
in semantics (e.g.~\cite{JacobsPoll03}) but,
following~\cite{Wadler97}, forms the basis for the treatment of
side-effects in functional languages such as
Haskell~\cite{PeytonJones03} and F\#~\cite{SymeGraniczEtAl07}. Monads
do offer support for concurrent programming, in particular through
variants of the \emph{resumption monad
  transformer}~\cite{CenciarelliMoggi93,Harrison06}, which lifts
resumptions in the style of Hennessy and
Plotkin~\cite{HennessyPlotkin79} to the monadic level, and which has
moreover been used in information flow security~\cite{HarrisonHook09},
semantics of functional logic programming languages such as
Curry~\cite{TolmachAntoy03}, modelling underspecification of
compilers, e.g.\ for ANSI C~\cite{PapaspyrouMacos00,Papaspyrou01}, and
to model the semantic of the
$\pi$-calculus~\cite{FioreMoggiEtAl02}. However, the formal basis for
concurrent functional-imperative programming is not as well-developed
as for the sequential case; in particular, Moggi's original
\emph{computational meta-language} is essentially limited to linear
sequential monadic programs, and does not offer native support for
concurrency.

The objective of the present work is to develop an extension of the
computational meta-language that can serve as a minimal common basis
for concurrent functional-imperative programming and semantics.  We
define an abstract \emph{meta-calculus for monadic processes} that is
based on the resumption monad transformer, and hence generic over the
base effects inherent in individual process steps. We work with
infinite resumptions, which brings tools from coalgebra into play, in
particular corecursion and coinduction~\cite{Rutten00}. We present a
complete equational axiomatization of our calculus which includes a
simple loop construct (in coalgebraic terms, coiteration) and then
derive a powerful \emph{corecursion schema} that allows defining
processes by systems of equations. It has a fully syntactic
justification, i.e.\ one can explicitly construct a solution to a
corecursive equation system by means of the basic term
language. Although there are strong corecursion results available in
the literature~(e.g.\ \cite{Bartels03,Uustalu03,MiliusPalmEtAl09}), our
corecursion schema does not seem to be covered by these, in particular
as it permits prefixing corecursive calls with monadic sequential
composition.

We exemplify our corecursion scheme with the definition of a number of
basic imperative programming and process-algebraic primitives
including parallel composition, and present a worked example, in which
we outline a safety proof for a monadic version of Dekker's mutual
exclusion algorithm, i.e.\ a concurrent algorithm with generic
side-effects. To this end, we employ a more high-level verification
logic that we develop on top of the basic equational calculus.

\paragraph {Further related work}
There is extensive work on axiomatic perspectives on effectful
iteration and recursion, including traced pre-monoidal
categories~\cite{BentonHyland03}, complete iterative
algebras~\cite{MiliusPalmEtAl09}, Kleene
monads~\cite{Goncharov10}, and recursive monadic binding
~\cite{ErkokLaunchbury00}.  The abstract notion of resumption goes
back at least to~\cite{HennessyPlotkin79}.  (Weakly) final coalgebras
of I/O-trees have been considered in the context of dependent type
theory for functional programming, without, however, following a fully
parametrised monadic approach as pursued
here~\cite{HancockSetzer05}. A framework where infinite resumptions of
a somewhat different type than considered here form the morphisms of a
category of processes, which is thus enriched over coalgebras for a
certain functor, is studied in~\cite{KrsticLaunchburyEtAl01}, but no
metalanguage is provided for such processes. A metalanguage that
essentially adds least fixed points, i.e.\ \emph{inductive} data types
as opposed to \emph{coinductive} process types as used in the present
work, to Moggi's base language is studied in~\cite{Filinski07};
reasoning principles in this framework are necessarily of a rather
different flavour. A resumption monad without a base effect, the
\emph{delay monad}, is studied in~\cite{Capretta05} with a view to
capturing general recursion. Our variant of the resumption monad
transformer belongs to the class of \emph{parametrized monads}
introduced in~\cite{Uustalu03}, where a form of corecursive scheme is
established which however does not seem to cover the one introduced
here.

\section{Computational Monads and Resumptions}

We briefly recall the basic concepts of the monadic representation of
side-effects, and then present the specific semantic framework
required for the present work. Intuitively, a monad $\BBT$ (mostly
referred to just as $T$) associates to each type $A$ a type $TA$ of
computations with results in $A$; a function with side effects that
takes inputs of type $A$ and returns values of type $B$ is, then, just
a function of type $A\to TB$. In other words, by means of a monad we
can abstract from notions of computation by switching from non-pure
functions to pure ones with a converted type profile. One of the
equivalent ways to define a monad over a category $\BC$ is by giving a
\emph{Kleisli triple} $\BBT=(T,\eta,\argument^{\klstar})$ where
$T:\Ob\BC\to\Ob\BC$ is a function, $\eta$ is a family of morphisms
$\eta_A:A\to TA$ called \emph{unit}, and $\argument^\klstar$ assigns
to each morphism $f:A\to TB$ a morphism $f^\klstar:TA\to TB$ such that
\begin{math}\eta_A^\klstar     =       \id_{TA}, ~~~~
  f^\klstar\comp\eta_A              =       f, \textrm{ ~~~and}~~~~
  g^\klstar\comp f^\klstar          =       (g^\klstar\comp f)^\klstar.
\end{math}
Thus, $\eta_A$ converts values of type $A$ into side-effect free
computations, and $\argument^\klstar$ supports the sequential
composition $g^\klstar f$ of programs $f:A\to TB$ and $g:B\to
TC$. A monad over a Cartesian category is \emph{strong} if it is equipped with a natural transformation $\tau_{A,B}:A\times TB\to T(A\times B)$ called \emph{strength}, subject to certain coherence conditions~\cite{Moggi91}. 
Since we are interested in concurrency, we require additional
structure for non-determinism:
\begin{defn}[Strong semi-additive monads]
  A strong monad $\BBT=(T,\eta,\argument^{\klstar},\tau)$ is
  \emph{semi-additive} if there exist natural transformations
  ${\delta:1\to T}$ and $\varpi:T\times T\to T$ making every $TA$ an internal
  bounded join-semilattice object so that $\argument^{\klstar}$ and $\tau$ respect the join-semilattice
structure in the following sense:
\begin{align*}
f^{\star}\delta    &=\delta,& f^{\star}\varpi &= \varpi\brks{f^{\klstar},f^{\klstar}},\\
\tau(f\times\delta)&=\delta,& \tau(f\times\varpi) &= \varpi\brks{\tau(f\times\pi_1),\tau(f\times\pi_2)}.
\end{align*}
\end{defn}
\noindent The above definition forces the nondeterministic choice to be an \emph{algebraic operation} in sense of~\cite{PlotkinPower01}. This implies that the
semilattice structure distributes over binding from the left (but not
necessarily from the right) as reflected in our calculus in Section~\ref{sec:calculus}.


\begin{rem}
As in the original computational metalanguage~\cite{Moggi91}, we work over an arbitrary base category $\BC$, with $\Set$ and $\omega\Cpo$  as prominent instances, thus establishing our corecursion scheme at a high level of generality. Our calculus will be sound and complete  w.r.t.\ the whole class of possible instances of $\BC$. Restricting, e.g., to order-theoretic models will, of course, preserve soundness, while completeness may break down due to particular properties of the base category becoming observable in the calculus. The FIX-logic of Crole and Pitts~\cite{CrolePitts90} is sound and complete w.r.t.\ certain order-theoretic models compatible with our models, in particular with $\Set$.
\end{rem}

\begin{expl}\hspace{-3pt}\cite{Moggi91}~~\label{expl:monads}
  The core examples of strong semi-additive monads are the finite
  powerset monad~$\PSet_\omega$, or, in the domain-theoretic setting,
  various powerdomain constructions.  Moreover, the powerset monad
  $\PSet$ and more generally, the \emph{quantale
    monad}~\cite{Jacobs10} $\lambda X.Q^X$ for a
  quantale~\cite{Rosenthal90} $Q$ are strong semi-additive
  monads. Further examples of semi-additive monads can be obtained
  from basic ones by combining them with other effects, e.g.\ by
  adding probabilistic choice~\cite{TixKeimelEtAl09} or by applying
  suitable monad transformers. In particular, the following monad
  transformers (which produce a new monad $Q$, given a monad $T$)
  preserve semi-additivity over any base category with sufficient structure:

  \vspace{-1em}
  \noindent
  \begin{tabular}{p{5cm}p{7cm}}
    \begin{enumerate}
    \item Exceptions: $Q A = T(A+E)$,
    \item States: $Q A = S\to T(A\times S)$,
    \end{enumerate}&
    \begin{enumerate}[start=3]
    \item I/O: $Q A = \mu X.\, T (A + I\to (X\times O))$,
    \item Continuations: $Q A = (A\to T K)\to T K$.
    \end{enumerate}
  \end{tabular}
  \vspace{-1em}
   
  \noindent E.g., the non-deterministic state monad $TX=S\to P(S\times
  X)$, is a strong semi-additive monad both over $\Set$ (with $P$
  denoting any variant of powerset) and over any reasonable category
  of domains (with $P$ denoting a powerdomain construction with deadlock).
\end{expl}
\noindent To model processes which are composed of atomic steps to be
thought of as pieces of imperative code with generic side-effects, we
use a variant of the \emph{resumption monad
  transformer}~\cite{CenciarelliMoggi93}: Assuming that for every
$X\in\Ob(\BC)$ the endofunctor $T(Id+X):\BC\to\BC$ possesses a final
coalgebra, which we denote
by $\nu\gamma.\,T(\gamma+X)$, we define a new monad $R$ by
\begin{equation*}
  RX=\nu\gamma.\,T(\gamma+X)
\end{equation*}
--- $R$ exists, e.g., if the base category is locally presentable and
$T$ is accessible~\cite{Worrell99}, a basic example being $TX=S\to
P(S\times X)$ where $P$ is finite powerset or a
powerdomain. Intuitively, a \emph{resumption}, i.e.\ a computation in
$RX$, takes an atomic step in $T$ and then returns either a value in
$X$ or a further computation in $RX$, possibly continuing in this way
indefinitely. Using a final coalgebra semantics amounts to identifying
processes up to coalgebraic behavioural equivalence, which generalizes
strong bisimilarity.

\section{A Calculus for Side-effecting Processes}\label{sec:calculus}
As originally observed by Moggi~\cite{Moggi91}, strong monads support a \emph{computational metalanguage}, i.e.\ essentially a generic sequential imperative programming language. Here we introduce a concurrent version of the metalanguage, the \emph{concurrent metalanguage}, based semantically on the resumption monad transformer. 

The concurrent metalanguage is parametrised over a countable signature $\Sigma$ including a set of atomic types $W$, from which the type system is generated by the grammar
\begin{displaymath}
P\gris W\mid 1 \mid P\times P\mid P+ P\mid T P\mid T_{\nu} P
\end{displaymath}
--- that is, we support sums and products, but not functional types,
our main target being the common imperative programming basis, which
does not include functional abstraction. Base effects are represented
by $T$, and resumptions by $T_\nu$. 

Moreover, $\Sigma$ includes function symbols $f:A\to B$ with given
profiles, where $A$ and $B$ are
types. 
The terms of the language, also referred to as \emph{programs}, and their types are then determined by the rules shown in Fig.~\ref{fig:base-terms}; the dotted line separates operators for sequential non-determinism from the process operators. Besides the standard term language for sums and products and the bind and return operators $\DO$ and $\ret$ of the computational metalanguage, the concurrent metalanguage includes operations  $\nil$ and $+$ are called \emph{deadlock} and~\emph{choice}, respectively, as well as two specific constructs (\emph{out} and \emph{unfold}) for resumptions, explained later.
Judgements $\Gamma\rhd t:A$ read `term $t$ has type $A$ in context $\Gamma$',
where a \emph{context} is a list $\Gamma=(x_1:A_1,\dots,x_n:A_n)$ of
typed variables. Programs whose type is of the form $T_{\nu} A$ are called \emph{processes}.  
The notions of free and bound variables are defined in a standard way, as well as a notion of capture-avoiding substitution. 

\begin{myfigure} 
\fbox{\parbox{\myboxwidth}{\vspace{-3ex}
\begin{flalign*}
~~\lrule{var}
{x:A\in\Gamma}{\Gamma \rhd x:A}&&
\lrule{app}
{f:A\to B\in\Sigma ~~~~ \Gamma\rhd t:A}
{\Gamma\rhd f(t):B}
&&
\lrule{1}
{}
{\Gamma\rhd\unit:1}&&
\end{flalign*}\vspace{-2.2ex}
\begin{flalign*}
~~\lrule{pair}
{\Gamma\rhd t:A ~~~~
\Gamma\rhd u:B}
{\Gamma\rhd \langle t,u\rangle : A \times B}
&&
\lrule{fst}
{\Gamma\rhd t:A\times B}
{\Gamma\rhd\fst t:A}&&
\lrule{snd}
{\Gamma\rhd t:A\times B}
{\Gamma\rhd\snd t:B}&&
\end{flalign*}\vspace{-2.2ex}
\begin{flalign*}
~~\lrule{case}
{\Gamma\rhd s:A+B ~~~~ \Gamma,x:A\rhd t:C ~~~~ \Gamma,y:B\rhd u:C}
{\Gamma\rhd\case{s}{\inl x\mapsto t}{\inr y\mapsto u}:C}
&&
\lrule{nil}
{}
{\Gamma\rhd\nil:TA}&&
\end{flalign*}\vspace{-2.2ex}
\begin{flalign*}
~~
\lrule{inl}
{\Gamma\rhd t:A}
{\Gamma\rhd\inl t:A+B}
&&
\lrule{inr}
{\Gamma\rhd t:B}
{\Gamma\rhd\inr t:A+B}&&
\lrule{ret}
{\Gamma \rhd t:A}
{\Gamma \rhd \ret{t}:TA}&&
\end{flalign*}\vspace{-2.2ex}
\begin{flalign*}
~~\lrule{do}{\Gamma\rhd p:TA ~~~~
\Gamma, x:A\rhd q:TB}
{\Gamma\rhd \letTerm{x\leteq p}{q}:TB}&&
\lrule{plus}
{\Gamma\rhd p+ q:TA}
{\Gamma\rhd p:TA\quad\Gamma\rhd q:TA}&&
\end{flalign*}\vspace{-1ex}
\dotfill
\begin{flalign*}
~~\lrule{out}
{\Gamma\rhd p: T_{\nu} A}
{\Gamma\rhd\unfold(p): T(T_{\nu} A+A)}&&
\lrule{unf}
{\Gamma\rhd p: A ~~~\Gamma, x:A\rhd q: T(A+B)}
{\Gamma\rhd\letLoop{x\letret p}{q}:T_{\nu} B}&&
\end{flalign*}\vspace{-3ex}
}}
 \caption{Typing rules for the concurrent metalanguage}
 \label{fig:base-terms}
\vspace{-2em}
\end{myfigure}

The semantics of the concurrent metalanguage is defined over
\emph{$\nuM$-models}, referred to just as \emph{models} below. A model
is based on a distributive category~\cite{Cockett93} $\BC$, i.e.\ a
category with binary sums and finite products such that the canonical
map $A\times B+A\times C\to A\times(B+C)$ is an isomorphism, with
inverse $\mathit{dist}:A\times(B+C)\to A\times B+A\times C$ (this
holds, e.g., when $\BC$ is Cartesian closed). Moreover, it
specifies a strong semi-additive monad $T$ on $\BC$ such that for
every $A\in\Ob(\BC)$ the functor $T(Id+A)$ possesses a final coalgebra
denoted $RA=\nu\gamma.\,T(\gamma+A)$, thus defining a functor $R$
(\emph{resumptions}).

A model interprets base types as objects of $\BC$. The interpretation
$\sem{A}$ of types $A$ is then defined by standard clauses for $1$,
$A\times B$, and $A+B$ and $\sem{TA}=T\sem{A}$, $\sem{T_\nu
  A}=R\sem{A}$. For $\Gamma=(x_1:A_1,\dots,x_n:A_n)$ we put
$\sem{\Gamma}=\sem{A_1}\times\dots\times\sem{A_n}$.
Moreover, a  model interprets function symbols $f:A\to B$ as morphisms $\sem{f}:\sem{A}\to\sem{B}$, which induces an interpretation $\sem{t}:\sem{\Gamma}\to\sem{A}$ of programs $\Gamma\rhd
t:A$ given by the usual clauses for variables, function application, pairing, projections, injections, and $\unit$. The operations $+$ and $\nil$ are interpreted by the bounded join semilattice operations $\varpi$ and $\delta$ of $T$, respectively. For the monad operations and the case operator, we have
\begin{myitemize}
\item $\sem{\Gamma\rhd\case{s}{\inl x\mapsto t}{\inr y\mapsto u}:C}=$\\$\bigl[\sem{\Gamma,x:A\rhd t:C}, \sem{\Gamma,y:B\rhd u:C}\bigr]\comp \mathit{dist}\comp\Brks{\id,\sem{\Gamma\rhd s:A+B}}$,
\item $\sem{\Gamma\rhd \letTerm{x\leteq p}{q}:TB} =\sem{\Gamma, x:A\rhd q:TB}\klcomp\tau_{\sem{\Gamma},\sem{A}}\comp\brks{\id, \sem{\Gamma\rhd p:TA}}$,
\item $\sem{\Gamma\rhd \ret t:TA}=\eta_A\circ\sem{\Gamma\rhd t:A}$,
\end{myitemize}
where as usual 
$\langle f,g\rangle:A\to B\times C$ denotes pairing of morphisms
$f:A\to B$, $g:A\to C$, 
and $[f,g]:A+B\to C$ denotes copairing of $f:A\to C$ and $g:B\to C$.

It remains to interpret $\unfold$, which is just the final coalgebra structure of $R A$, and the loop construct $\letLoop{x\letret p}{q}$ which captures coiteration. Formally, let $\fin{A}:RA\to T(RA+A)$  be the final coalgebra structure, and for a coalgebra $f:X\to T(X+A)$, let $\ana{f}:X\to R A$ be the unique coalgebra morphism. Then we put
\begin{align*}
	\sem{\Gamma\rhd\unfold(p):T(T_{\nu} A+A)}&=\fin{\sem{A}}\comp\sem{\Gamma\rhd p:T_{\nu} A}\\
	\sem{\Gamma\rhd\letLoop{x\letret p}{q}:T_{\nu} B} & =
          R\pi_2\comp\ana{f}\comp{\brks{\id,\sem{\Gamma\rhd p:A}}}
        \end{align*}
        where $f=T(\mathit{dist})\comp\tau\brks{\pi_1,g}$ with
        $g=\sem{\Gamma,x:A\rhd q:T(A+B)}$. Thus, $\ana{f}$ is uniquely
        determined by the commutative diagram\ednote{LS: this may have
          to go.}
\begin{displaymath}
\xymatrix@R30pt@C35pt@M6pt{
\sem{\Gamma}\times\sem{A}	\ar[r]^>>>>>>>>>{T(\mathit{dist})\comp\tau\brks{\pi_1,g}} \ar[d]_{\ana{f}} & T(\sem{\Gamma}\times\sem{A}+\sem{\Gamma}\times\sem{B})\ar[d]^{T(\ana{f}+\id)}\\
R(\sem{\Gamma}\times\sem{B}) \ar[r]^>>>>>>>{\fin{\sem{\Gamma}\times\sem{B}}} & T(R(\sem{\Gamma}\times\sem{B})+\sem{\Gamma}\times\sem{B}).
}
\end{displaymath}
\noindent 
A model is said to \emph{satisfy} a well-typed equation
$\Gamma\rhd t=s$ if $\sem{\Gamma\rhd t:A}=\sem{\Gamma\rhd s:A}$.

\begin{myfigure}
\fbox{\parbox{\myboxwidth}{\vspace{-3ex}
\begin{flalign*}
~~\textbf{(case\_inl)}\quad\case{\inl p}{\inl x\mapsto q}{\inr y\mapsto r} = q[p/x]&&\textbf{(fst)}\quad\fst\brks{p,q} = p~
\end{flalign*}\vspace{-4.6ex}
\begin{flalign*}
~~\textbf{(case\_inr)}\quad\case{\inr p}{\inl x\mapsto q}{\inr y\mapsto r} = r[p/y]&&\textbf{(snd)}\quad\snd\brks{p,q} =q~
\end{flalign*}\vspace{-4.6ex}
\begin{flalign*}
~~\textbf{(case\_id)}\quad\case{p}{\inl x\mapsto\inl x}{\inr y\mapsto\inr y} =p&&\textbf{(pair)}\quad\brks{\fst p,\snd p} = p~
\end{flalign*}\vspace{-3.6ex}
\begin{flalign*}
~~\textbf{(case\_sub)}\quad&\begin{split}\case{p}{\inl x\mapsto t[q/z]}{\inr y\mapsto t[r/z]}\\[1ex]
\quad=t[\case{p}{\inl x\mapsto q}{\inr y\mapsto r}/z]\end{split}\qquad&&&(x,y\notin\Vars(r))~
\end{flalign*}\vspace{-3ex}
\begin{flalign*}
~~\textbf{($\star$)}~~ p:1 =\unit\!&&\textbf{(unit$\bf_1$)}~~\letTerm{x\leteq p}{\ret{x}} = p&&
\textbf{(unit$\bf_2$)}~~\letTerm{x\leteq\ret{a}}{p} = p[a/x]
\end{flalign*}\vspace{-4.6ex}
\begin{flalign*}
~~\textbf{(assoc)}\quad\letTerm{x\leteq (\letTerm{y\leteq p}{q})}{r} &= \letTerm{x\leteq p;y\leteq q}{r}&&(y\notin\Vars(r))~
\end{flalign*}\vspace{-5ex}\par\dotfill\vspace{-1ex}
\begin{flalign*}
\quad\textbf{(nil)} 		\quad  p+ \nil & = p&
\textbf{(comm)} 		\quad p+ q & = q + p&
\textbf{(idem)}	    	\quad p+ p 	& = p&~
\end{flalign*}\vspace{-4.5ex}
\begin{flalign*}
\quad\textbf{(assoc\_plus)}  \quad p+(q+ r) & = (p+ q)+ r&
&&\textbf{(dist\_nil)} \quad \letTerm{x\leteq\nil}{r} &= \nil&~
\end{flalign*}\vspace{-4.5ex}
\begin{flalign*}
\quad\textbf{(dist\_plus)}   \quad \letTerm{x\leteq (p+ q)}{r}& = \letTerm{x\leteq p}{r}+\letTerm{x\leteq q}{r}&
\end{flalign*}\vspace{-5ex}\par\dotfill
\vspace{-1ex}
\begin{align*}
\quad\lrrule{co-iter}
{~\unfold(p) = \casenext{q}{\cont x\mapsto\rcont p}{\done y\mapsto\rdone y}~}
{p[y/x] = \letLoop{x\letret y}{q}}&&~
\end{align*}
\vspace{-3ex}
}}
  \caption{Axiomatization of the concurrent metalanguage}
  \label{fig:proc_calc}
\vspace{-2em}
\end{myfigure}
As suggestive abbreviations for use in process definitions, we write $\rcont$ for $(\ret\inl)$ and  $\rdone$ for $(\ret\inr)$. Moreover, we write $(\casenext{p}{\cont x\mapsto q}{\done y\mapsto r})$ for 
%
$
(\letTerm{z\leteq p}{\case{z}{\inl x\mapsto q}{\inr y\mapsto r}})
$
.
%
%
%
We also define a converse $\fold:T(T_\nu A+A)\to T_\nu A$ to $\unfold$ by 
\begin{equation*}
\fold(p)~=~\letLoop{q\letret p}{\casenext{q}{\cont y\mapsto\rcont(\unfold(y))}{\done x\mapsto\rdone x}}.
\end{equation*}
%
%
%
%
An axiomatization \nuM of the concurrent metalanguage is
given in Fig.~\ref{fig:proc_calc} (where we omit the standard
equational logic ingredients including the obvious congruence
rules). Apart from the standard axioms for products and coproducts,
\nuM contains three well-known monad laws, axioms for semi-additivity
(middle section) and a novel (bidirectional) rule~\textbf{(co-iter)} for
effectful co-iteration (which in particular can be used to show that
$\fold$ is really inverse to $\unfold$).
%
%
%
%
\begin{theorem}\label{thm:rec_compl}
$\nuM$ is sound and strongly complete over $\nuM$-models.
\end{theorem}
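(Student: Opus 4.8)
\emph{Soundness.} This is a direct check of each axiom of Fig.~\ref{fig:proc_calc} against the interpretation. The product and coproduct laws and the monad laws \textbf{(unit$_1$)}, \textbf{(unit$_2$)}, \textbf{(assoc)} are the usual ones for a distributive category carrying a strong monad; the semilattice axioms \textbf{(nil)}, \textbf{(comm)}, \textbf{(idem)}, \textbf{(assoc\_plus)} and the left-distributivity axioms \textbf{(dist\_nil)}, \textbf{(dist\_plus)} are immediate from the bounded join-semilattice object structure on each $TA$ and from the compatibility of $\argument^{\klstar}$ and $\tau$ with $\delta,\varpi$ built into strong semi-additivity. The only nonroutine case is the bidirectional rule \textbf{(co-iter)}. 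Unfolding the semantic clauses, with $g$ and $f=T(\mathit{dist})\comp\tau\brks{\pi_1,g}$ as in the interpretation of $\letLoop{\argument\letret\argument}{\argument}$, the hypothesis $\unfold(p)=\casenext{q}{\cont x\mapsto\rcont p}{\done y\mapsto\rdone y}$ says exactly that (the tupling with the context of) $\sem{p}$ is a $T(Id+\argument)$-coalgebra morphism from $f$ to the final coalgebra $(R(\dots),\fin{})$; by finality it must then be $\ana{f}$, and post-composing with $R\pi_2$ yields $\sem{p[y/x]}=\sem{\letLoop{x\letret y}{q}}$. Conversely, $\ana{f}$ is a coalgebra morphism — this is the commuting square displayed before the theorem — so $\sem{\letLoop{x\letret y}{q}}$ satisfies the hypothesis equation. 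Hence both readings of \textbf{(co-iter)} are sound.

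\emph{Completeness via a term model.} I carry out the term-model method relative to an arbitrary set $E$ of well-typed equations containing \nuM, which yields strong completeness (weak completeness being $E=\nuM$). Let $\BC_E$ be the category with types as objects and, as morphisms $A\to B$, the $E$-provable-equality classes $[x:A\rhd t:B]$ of terms in context, composed by capture-avoiding substitution. The axioms \textbf{($\star$)}, \textbf{(pair)}, \textbf{(fst)}, \textbf{(snd)} make $\BC_E$ finite-product; the \textbf{(case\_$\ast$)} axioms make $+$ of types a binary coproduct and, jointly, yield that products distribute over coproducts, with inverse the evident $\mathit{dist}$; $\ret$ is the unit, $[q]\mapsto[\letTerm{x\leteq\argument}{q}]$ the Kleisli extension, and $[\brks{x,q}]\mapsto[\letTerm{y\leteq q}{\ret\brks{x,y}}]$ the strength of a strong monad $T$, the monad and strength laws coming from \textbf{(unit$_{1,2}$)} and \textbf{(assoc)}; and $T$ is strong semi-additive with $\delta=[\nil]$, $\varpi=[\argument+\argument]$, the semilattice laws being \textbf{(nil)}, \textbf{(comm)}, \textbf{(idem)}, \textbf{(assoc\_plus)}, and the compatibility conditions with $\argument^{\klstar}$ and $\tau$ reducing — since both are built from $\DO$ in $\BC_E$ — to \textbf{(dist\_nil)} and \textbf{(dist\_plus)}.

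\emph{Final coalgebras in $\BC_E$.} It remains to realise the $\nu$-structure, and this is where \textbf{(co-iter)} is used. Fix a type $B$; the object $T_\nu B$ with coalgebra $[x:T_\nu B\rhd\unfold(x):T(T_\nu B+B)]$ is the final $T(Id+B)$-coalgebra. Given any coalgebra, i.e.\ a term $x:A\rhd q:T(A+B)$, the term $z:A\rhd\letLoop{x\letret z}{q}:T_\nu B$ is a coalgebra morphism (this is precisely the ``$\Leftarrow$'' reading of \textbf{(co-iter)}), and it is the only one, since any coalgebra morphism satisfies the hypothesis of \textbf{(co-iter)} and hence, by the ``$\Rightarrow$'' reading, equals it. Thus $T_\nu B=\nu\gamma.\,T(\gamma+B)=RB$ in $\BC_E$, with $\unfold$ realising $\fin{}$ and $\letLoop{\argument\letret\argument}{\argument}$ realising $\ana{}$ (the context-parametrised interpretation of $\letLoop$, built from $\tau$, $\mathit{dist}$ and $R\pi_2$, collapsing to this by a monad-law computation), and $\fold$ is inverse to $\unfold$ by the same rule. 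Hence $\BC_E$ is a $\nuM$-model, and its canonical interpretation sends $\Gamma\rhd t:A$ to $[\Gamma\rhd t:A]$ — the semantic clauses agree with the syntactic term formation by exactly the axioms used above, notably \textbf{(case\_sub)} for $\case$ and \textbf{(assoc)} for $\DO$. Therefore $\BC_E$ validates $\Gamma\rhd t=s$ iff this equation is $E$-derivable, so any equation valid in all $\nuM$-models of $E$ holds in $\BC_E$ and is $E$-derivable.

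\emph{Main obstacle.} The crux is the final-coalgebra step: verifying that the bidirectional \textbf{(co-iter)} captures precisely the existence \emph{and} uniqueness of coalgebra morphisms into $T_\nu B$, that $B\mapsto T_\nu B$ is thereby functorial and carries the monad $R$ with the prescribed $\eta$, $\argument^{\klstar}$, $\tau$, and — the fiddliest point — that the context-parametrised semantics of $\letLoop{\argument\letret\argument}{\argument}$ (via $\tau$, $\mathit{dist}$, $R\pi_2$) collapses in $\BC_E$ to the plain anamorphism of the body; these reductions are all instances of the monad, distributivity and semi-additivity axioms but need care. That ``enough'' coalgebras are present is automatic, since every abstract coalgebra of $\BC_E$ is by definition given by a term.
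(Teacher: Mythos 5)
Your proposal follows the paper's own route: soundness is reduced to the one nontrivial case \textbf{(co-iter)}, argued via finality of the coalgebra $(R\sem{B},\fin{\sem{B}})$, and strong completeness is obtained by a term model over an arbitrary theory extending \nuM (the paper itself says no more than ``by term model construction'', so your elaboration of the syntactic category is consistent with, and more detailed than, what is printed). The one imprecision is in the forward direction of \textbf{(co-iter)}: $h=\sem{\Gamma,x\rhd p}$ cannot ``be $\ana{f}$'' since their codomains differ ($R\sem{B}$ versus $R(\sem{\Gamma}\times\sem{B})$); the paper instead reads the hypothesis as saying that $h$ is the anamorphism of $w=T(\id+\pi_2)\comp f$ and then separately verifies, by a second appeal to uniqueness, that $R\pi_2\comp\ana{f}$ is also a coalgebra morphism out of $w$, so that $h=\ana{w}=R\pi_2\comp\ana{f}$ --- your ``post-composing with $R\pi_2$'' silently assumes this extra step.
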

%
%
%
%
%
A core result on the concurrent metalanguage is an
expressive corecursion scheme supported by the given simple
axiomatisation. Its formulation requires $n$-ary coproducts
$A_1+\dots+A_n$ with coproduct injections $\inj_i^n:A_i\to
{A_1+\ldots+A_n}$ and a correspondingly generalized case construct;
all this can clearly be encoded in $\nuM$.
\begin{thmdefn}[Mutual corecursion]\label{lem:solve}
  Let $f_i:A_i\to T_{\nu} B_i$, $i=1,\dots,k$ be fresh function
  symbols. A \emph{guarded corecursive scheme} is a system of equations
\begin{equation*}\label{eq:corec}
\unfold(f_i(x)) = \letTerm{z\leteq p_i}{\case{z}{\inj_1^{n_i} x_1\mapsto p_{1}^i}{\dotsc;\inj_{n_i}^{n_i} x_{n_i}\mapsto p_{n_i}^i}}
\end{equation*}
for $i=1,\dots,k$ such that for every $i$, $p_i$ does not contain any
$f_j$, and for every $i,j$ $p_j^i$ either does not contain any $f_m$
or is of the form $p_j^i\equiv\rcont f_m(x_j)$ for some $m$. Such a
guarded corecursive scheme uniquely defines $f_1,\ldots,f_k$ (as
morphisms in the model), and the solutions $f_i$ are expressible as
programs in $\nuM$.
\end{thmdefn}
As a first application of guarded corecursive schemes, we define a
binding operation $\DO_\nu$ with the same typing as $\DO$ but with $T$
replaced by $T_\nu$ corecursively by
\begin{equation*}
\unfold(\letTermO{x\leteq p}{q}) =\casenext{\unfold(p)}{\cont x\,\mapsto\rcont (\letTermO{x\leteq p}{q})}{\done x\mapsto\unfold(q)}.
\end{equation*}
Similarly, we define operations $\ret_{\nu}$, $\nil_{\nu}$, and
$+_\nu$ as analogues of $\ret$, $\nil$, and $+$ by putting $\ret_{\nu}
p=\fold(\rdone p)$, $\nil_{\nu}=\fold(\nil)$, and $p+_{\nu} q
=\fold(\unfold(p)+\unfold(q))$.  \emph{These operations turn $T_\nu$
  into a strong-semiadditive monad}; formally, we can derive (in
$\nuM$) the top and middle sections of Fig.~\ref{fig:proc_calc} with
$T$ replaced by $T_\nu$ (the monad laws already follow from results
of~\cite{Uustalu03}).



\section{Programming with Side-effecting Processes}

Above, we have begun to define operations on processes; in particular,
$\nil_{\nu}$ is a deadlocked process, and $+_{\nu}$ is a
nondeterministic choice of two processes. We next show how to define
more complex operations, including parallel composition, by means
of guarded corecursive schemes.

Note that over distributive categories, one can define the type of
Booleans with the usual structure as $2=1+1$. We write
$(\ifTerm{b}{p}{q})$ as an abbreviation for ${(\case{b}{\inl x\mapsto
    p}{\inr x\mapsto q})}$ where $b$ has type $2$.
\vspace{-3ex}

\subsubsection{Sequential composition} Although $T_\nu$ is a monad,
its binding operator is not quite what one would want as sequential
composition of processes, as it merges the last step of the first
process with the first step of the second process.  We can, however,
capture sequential composition (with the same typing) in the intended
way by putting
\begin{displaymath}
{\seqTerm{x\leteq p}{q}=\letTermO{x\leteq p}{\fold(\rcont q)}}.
\end{displaymath} 

\vspace{-3ex}
\subsubsection{Branching}
Using the effect-free $\operatorname{\sf if}$ operator defined
earlier, we can define a conditional branching operator for processes
$p,q:T_\nu A$ and a condition $b:T2$ by
$$\ifTermO{b}{p}{q}
=\fold\bigl(\letTerm{z\leteq b}{\ifTerm{z}{(\rcont p)}{(\rcont q)}}\bigr).$$

\vspace{-4ex}
\subsubsection{Looping} For terms $\Gamma\rhd p$; $\Gamma,x: A\rhd b: T2$; and
$\Gamma,x:A\rhd q: T_{\nu} A$, we define loops
$$\Gamma\rhd\whileTerm{x\letret p}{b}{q}: T_{\nu} A\quad\text{ and }\quad
\Gamma\rhd\untilTerm{x\letret p}{q}{b}: T_{\nu} A$$ as follows. We
generalize the until loop to a program $U^b_{x,q}(r)$ for
$\Gamma\rhd r:T_\nu A$ intended to represent $\seqTerm{y\leteq
  r}{\whileTerm{x:=y}{b}{q}}$ (so that $(\untilTerm{x\letret
  p}{q}{b})=U^b_{x,q}(q[p/x])$) and abbreviate
$W^b_{x,q}(p)=(\whileTerm{x\letret p}{b}{q})$. We then define the four
functions $W^b_{x,q}$ and $U^b_{x,q}$ (for $b:2$) by the guarded
corecursive scheme
\begin{flalign*}\renewcommand{\arraystretch}{1}
\unfold(W^b_{x,q}(p))=&\,\letTerm{v\leteq b[p/x]}{\ifTerm{v}{\rcont(U_{x,q}^{\neg b}(q[p/x]))}{\rdone(p)}},\\
\unfold(U_{x,q}^b(r))=&\,\casenext{\unfold(r)}{\cont z\mapsto\rcont(U_{x,q}^{b}(z))}{\done y\mapsto\rcont(W_{x,q}^{\neg b}(y))}.
\end{flalign*}

\vspace{-2ex}
\subsubsection{Exceptions} As the concurrent metalanguage includes
coproducts, the exception monad transformer($T^{\EE}
A=T(A+E)$)~\cite{CenciarelliMoggi93} and the corresponding operations for raising and handling exceptions are directly expressible in $\nuM$.

\vspace{-2ex}
\subsubsection{Interleaving}
We introduce process interleaving $\ll:T_{\nu}A\times T_{\nu}B\to
T_{\nu}(A\times B)$ by a CCS-style expansion law~\cite{Milner89} (using
an auxiliary left merge $\lm$)
\begin{flalign*}
\unfold(p\ll q) =\;&\unfold(p\lm q)+ \letTerm{\brks{x,y}\leteq\unfold(q\lm p)}{\ret\brks{y,x}},\\
\unfold(p\lm q) =\;&\begin{array}[t]{rl}\casenext{\unfold(p)}{\cont r&\mapsto\rcont(r\ll q)}{\\\done x&\mapsto\rcont(\letTermO{y\leteq q}{\ret_\nu\brks{x,y}})}.\end{array}
\end{flalign*}
This is easily seen to be equivalent to the guarded corecursive
scheme
\begin{flalign*}
\unfold(p\ll q) =\;&\letTerm{u\leteq (p\lsm q + p\rsm q)}{\\&\case{u}{\inl\brks{s,t}\mapsto\rcont(s\ll t)}{\inr r\mapsto\rcont r}}
\end{flalign*}
where for $p:T_{\nu} A$, $q:T_{\nu} B$, $p\lsm q:T(T_{\nu} A\times
T_{\nu} B + T_{\nu}(A\times B))$ is defined as
\begin{flalign*}
p\lsm q =\;&\casenext{\unfold(p)}{\cont r\mapsto\ret\inl\brks{r,q}}{\done x\mapsto\ret\inr(\letTermO{y\leteq q}{\ret_{\nu}\brks{x,y}})}
\end{flalign*}
and $p\rsm q:T(T_{\nu} A\times T_{\nu} B + T_{\nu}(A\times B))$ is the evident dual of $p\lsm q$.
\section{Verification and Process Invariants}
We now explore the potential of our formalism as a \emph{verification}
framework, extending existing monad-based program
logics~\cite{SchroderMossakowski04b,SchroderMossakowski09} to
concurrent processes. A cornerstone of these frameworks is a notion
of \emph{pure} program:
\begin{defn}[Pure programs]
  A program $p:T A$ is \emph{pure} if 
\begin{myitemize}
\item $p$ is \emph{discardable}, i.e., $\letTerm{y\leteq
    p}{\ret{\unit}}=\ret{\unit}$;
\item $p$ is \emph{copyable}, i.e.\ $ \letTerm{x\leteq p;y\leteq
    p}{\ret\brks{x,y}}=\letTerm{x\leteq p}{\ret\brks{x,x}}$; and
\item $p$ commutes with any other discardable and copyable program $q$, i.e.\\
\begin{math}
    (\letTerm{x\leteq p;y\leteq q}{\ret\brks{x,y}})=
    \letTerm{y\leteq q;x\leteq p}{\ret\brks{x,y}}.
\end{math}
\end{myitemize}
\end{defn}
Intuitively, pure programs are those that can access internal data
behind the computation but cannot affect it. A typical example of a
pure program is a getter method. As shown
in~\cite{SchroderMossakowski04b}, pure programs form a submonad $P$ of
$T$. A~\emph{test} is a program of type $P2$. All logical
connectives extend to tests; e.g.\ $\neg b = (\letTerm{x\leteq
  b}{\ret\neg x})$ for $b:P2$.

Given a program $p:T A$ and tests $\phi,\psi:P2$, the program
$\filter(p,\phi,\psi):T A$ is defined by the equation
\begin{align*}
\filter(p,\phi,\psi)=\letTerm{x\leteq\phi}{y\leteq p;z\leteq\psi;\ifTerm{(x\impl z)}{\ret y}{\nil}}.
\end{align*}
Intuitively, $\filter$ modifies the given program $p$ by removing
those threads that satisfy the precondition $\phi$ but fail the
postcondition $\psi$. This enables us to encode a Hoare triple
(alternatively to~\cite{SchroderMossakowski04b,SchroderMossakowski09})
by the equivalence
\begin{equation*}
\{\phi\}p\{\psi\}\iff\filter(p,\phi,\psi) = p
\end{equation*}
--- i.e.\ the Hoare triple $\{\phi\}p\{\psi\}$ is satisfied iff
$\filter(p,\phi,\psi)$ does not remove any execution paths from
$p$. On the other hand, $\filter$ extends to processes as follows:
\begin{align*}
\filter_{\nu}(p,\phi,\psi)=\letLoop{z\letret p}{\fold(\filter(\unfold(z),\phi,\psi))}.
\end{align*}
It turns out that the above definition of Hoare triple is equivalent
to the one from~\cite{SchroderMossakowski04b,SchroderMossakowski09},
which in particular enables use of the sequential monad-based Hoare
calculus of~\cite{SchroderMossakowski09}:
\begin{lemma}\label{lem:hoare}
  For every program $p$ and tests $\phi$, $\psi$, $\{\phi\}p\{\psi\}$
  is equivalent to the equation
\begin{align*}
&\letTerm{x\leteq\phi;y\leteq p;z\leteq\psi}{\ret\brks{x,y,z,x\impl z}}=\\
&\letTerm{x\leteq\phi;y\leteq p;z\leteq\psi}{\ret\brks{x,y,z,\True}}.
\end{align*}
\end{lemma}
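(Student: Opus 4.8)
The plan is to unfold both sides of the claimed equivalence using the definition of $\filter$ and the monad laws, reducing the equation $\filter(p,\phi,\psi)=p$ to the stated equality of two $\ret$-terminated sequences. First I would spell out the left-hand side of the definition of the Hoare triple: by definition $\filter(p,\phi,\psi)=p$ means
\[
\letTerm{x\leteq\phi;y\leteq p;z\leteq\psi}{\ifTerm{(x\impl z)}{\ret y}{\nil}}\;=\;p.
\]
The key move is to recognize that this conditional-valued program carries exactly the same information as the tuple-valued program $\letTerm{x\leteq\phi;y\leteq p;z\leteq\psi}{\ret\brks{x,y,z,x\impl z}}$, because $\phi$ and $\psi$ are tests (hence pure, i.e.\ discardable and copyable, and commuting with each other), so one can recover $x$, $z$, and $x\impl z$ from copies of $\phi$ and $\psi$ run alongside $p$, and conversely one can project the tuple back down to the conditional branch.

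Concretely, I would argue in two directions. For the forward direction, assume $\filter(p,\phi,\psi)=p$. Using copyability of $\phi$ and $\psi$ and the fact that they commute with $p$ and with each other, I rewrite
\[
\letTerm{x\leteq\phi;y\leteq p;z\leteq\psi}{\ret\brks{x,y,z,x\impl z}}
\]
by duplicating the runs of $\phi$ and $\psi$: one pair of copies feeds the $\filter$-style conditional, the other pair supplies $x$ and $z$ in the returned tuple. After substituting $\filter(p,\phi,\psi)=p$ into the copy that produced $y$ via the conditional, the $\nil$ branch is annihilated by \textbf{(dist\_nil)} and the surviving branch forces $x\impl z=\True$ on every remaining thread, yielding the right-hand side. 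For the converse, assume the tuple equation; project both sides along the map that inspects the fourth component and returns $\ret y$ if it is $\True$ and $\nil$ otherwise (this is definable via $\ifTerm{-}{-}{-}$ and \textbf{(case\_inl)}/\textbf{(case\_inr)}). On the right-hand side every thread has fourth component $\True$, so the projection yields $\letTerm{x\leteq\phi;y\leteq p;z\leteq\psi}{\ret y}$, which by discardability of $\psi$ and then $\phi$ collapses to $p$; on the left-hand side the same projection yields exactly $\filter(p,\phi,\psi)$ after reintroducing the discarded $x\impl z$ as the branch condition. Hence $\filter(p,\phi,\psi)=p$.

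The main obstacle is bookkeeping the purity side-conditions carefully: every step that duplicates or reorders $\phi$ and $\psi$ relative to $p$ and to each other must cite copyability, discardability, and the commutation clause of the definition of pure program, and one must check that $\phi,\psi:P2$ really are pure (this is the cited fact that $P$ is a submonad of $T$). The actual equational manipulation is then a routine, if somewhat tedious, chain of applications of \textbf{(assoc)}, \textbf{(unit$\bf_2$)}, \textbf{(dist\_nil)}, \textbf{(dist\_plus)} is not needed here, and the product axioms; the only genuinely delicate point is ensuring that the conditional $\ifTerm{(x\impl z)}{\ret y}{\nil}$ and the tuple $\ret\brks{x,y,z,x\impl z}$ are interderivable as ``payloads'' for the same pure context, which is where the equivalence really lives.
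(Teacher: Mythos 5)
Your proposal follows essentially the same route as the paper's own proof: for the forward direction you substitute $p=\filter(p,\phi,\psi)$ in the $y$\dash position, unfold $\filter$, and use copyability/commutation of the pure tests to merge the duplicated runs of $\phi$ and $\psi$; for the converse you post\dash compose the assumed tuple equation with the map $\brks{x,y,z,v}\mapsto\ifTerm{v}{\ret y}{\nil}$, recovering $\filter(p,\phi,\psi)$ on one side and, via discardability, $p$ on the other. The one imprecise point is the last step of your forward direction: $\letTerm{x\leteq\phi;y\leteq p;z\leteq\psi}{\ifTerm{(x\impl z)}{\ret\brks{x,y,z,\True}}{\nil}}$ is not reduced to $\letTerm{x\leteq\phi;y\leteq p;z\leteq\psi}{\ret\brks{x,y,z,\True}}$ by \textbf{(dist\_nil)} alone (dropping a $\nil$ else\dash branch is unsound when $x\impl z$ can be $\False$); as in the paper, one must re\dash fold this term into $\letTerm{x\leteq\phi;y\leteq\filter(p,\phi,\psi);z\leteq\psi}{\ret\brks{x,y,z,\True}}$ and invoke the hypothesis $\filter(p,\phi,\psi)=p$ a second time.
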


\noindent A test $\phi$ is an \emph{invariant} of process $p$ if
$\filter_{\nu}(p,\phi,\phi)=p$. We use $\inv(p,\phi)$ as a shorthand
for this equality. Given a process $p:T_{\nu} A$, we define
\emph{partial execution} of $p$ by
\begin{align*}
\exec(p) = \fold(\casenext{p}{\cont x\mapsto\unfold(x)}{\done x\mapsto\rdone x}).
\end{align*}
For every $p$, $\exec(p)$ is precisely the program obtained by
collapsing the first and the second steps of $p$ into one. We denote
by $\exec^n(p)$ the $n$-fold application of $\exec$ to $p$. This
allows us to formalize satisfaction of a safety property $\phi$ by a
process~$p$:
\begin{center}
\it `$p$ is safe w.r.t.\ $\psi$ at $\phi$' iff for every $n$, $\{\phi\}\exec^n(p)\{\psi\}$.
\end{center}
Note, however, that this definition is not directly expressible in our
logic, because it involves quantification over the naturals. Often
this problem can be overcome by picking out a suitable process
invariant.

\begin{lem}\label{lem:safety}
  Let $\phi,\psi$, and $\xi$ be tests such that $\phi\impl\xi$
  and $\xi\impl\psi$, and let $p$ be a process. Then
  $\inv(p,\xi)$ implies $\{\phi\}\exec^n(p)\{\psi\}$ for every $n$.
\end{lem}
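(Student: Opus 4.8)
The plan is to prove by induction on $n$ that $\inv(p,\xi)$ entails a strengthened statement, namely that $\exec^n(p)$ again satisfies $\inv(\exec^n(p),\xi)$ together with $\{\xi\}\exec^n(p)\{\xi\}$; the desired Hoare triple $\{\phi\}\exec^n(p)\{\psi\}$ then follows by pre- and post-composing with the implications $\phi\impl\xi$ and $\xi\impl\psi$ using the standard sequential Hoare calculus of~\cite{SchroderMossakowski09} (consequence rule), which is available via Lemma~\ref{lem:hoare}. First I would unfold the definitions: $\inv(p,\xi)$ means $\filter_\nu(p,\xi,\xi)=p$, and $\{\xi\}\exec^n(p)\{\xi\}$ means $\filter(\exec^n(p),\xi,\xi)=\exec^n(p)$ after one application of $\unfold$/$\fold$. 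So the real content is a commutation lemma relating $\exec$, $\filter$, $\filter_\nu$, and $\unfold$.

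The key step is to establish, from $\inv(p,\xi)$, that both (i) $\filter(\unfold(p),\xi,\xi)=\unfold(p)$, i.e.\ the very first atomic step of $p$ preserves $\xi$, and (ii) $\inv(x,\xi)$ holds for the "continuation" $x$ produced by $\unfold(p)$ in the $\cont x$ branch. Concretely, I would expand $\filter_\nu(p,\xi,\xi)=\letLoop{z\letret p}{\fold(\filter(\unfold(z),\xi,\xi))}$ using the \textbf{(co-iter)} rule of Fig.~\ref{fig:proc_calc}, rewrite $\unfold(\filter_\nu(p,\xi,\xi))$ via $\fold$/$\unfold$ being mutually inverse, and compare with $\unfold(p)$; the equation $\filter_\nu(p,\xi,\xi)=p$ then yields, after applying $\unfold$ to both sides and using the laws for $\casenext$ and the semilattice axioms, that $\filter(\unfold(p),\xi,\xi)$ agrees with $\unfold(p)$ on the $\done$ part, and that each continuation in the $\cont$ part is again a fixed point of $\filter_\nu(-,\xi,\xi)$. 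Given (i) and (ii), I would then show the induction step: $\exec(p)=\fold(\casenext{p}{\cont x\mapsto\unfold(x)}{\done x\mapsto\rdone x})$, so $\unfold(\exec(p))$ collapses the first two steps; using (i) for the first step and the freshly-derived $\inv(x,\xi)$ for the second (which gives $\filter(\unfold(x),\xi,\xi)=\unfold(x)$ by the base case of (i) applied to $x$), together with the fact that $\filter$ commutes with $\DO$-sequencing along the lines of \textbf{(dist\_plus)}, \textbf{(dist\_nil)} and \textbf{(assoc)}, one checks $\filter(\unfold(\exec(p)),\xi,\xi)=\unfold(\exec(p))$ and $\inv(\exec(p),\xi)$, closing the induction.

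The main obstacle will be the bookkeeping in the commutation lemma (i)/(ii): one has to push $\filter$ through the nested $\casenext$/$\DO$ structure of $\unfold(\exec(p))$ and show that $\filter$ applied twice with the same test $\xi$ behaves as $\filter$ applied once when $\xi$ is already invariant — this relies on $\xi$ being a \emph{test} (so $\phi:P2$, hence discardable, copyable, and central), so that the intermediate check $x\impl z$ introduced by $\filter$ can be duplicated/absorbed via the pure-program laws. I expect this to reduce, after the reductions above, to an essentially equational identity in the sequential fragment $\saM$, provable using the monad laws \textbf{(unit$_1$)}, \textbf{(unit$_2$)}, \textbf{(assoc)}, the semilattice distributivity laws, and the purity axioms for $\xi$; no coinduction beyond the single use of \textbf{(co-iter)} to unwind $\filter_\nu$ should be needed, since $\exec^n$ is a finite iterate.
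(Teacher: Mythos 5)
The paper itself does not include a proof of this lemma in the appendix, so there is nothing to compare against line by line; judged on its own terms, your overall strategy is the natural one and is almost certainly what the authors intend: induct on $n$ with the strengthened hypothesis $\inv(\exec^n(p),\xi)$, use the Hoare sequencing rule for the collapsed step, and finish with the consequence rule via $\phi\impl\xi$ and $\xi\impl\psi$. However, one step in your key lemma is not a valid inference as stated. From $\filter_{\nu}(p,\xi,\xi)=p$ you propose to extract (ii) ``$\inv(x,\xi)$ holds for the continuation $x$ produced in the $\cont x$ branch of $\unfold(p)$''. That continuation is a \emph{bound} variable under a $\DO$-binding: an equation of the form $\letTerm{z\leteq q}{r}=\letTerm{z\leteq q}{r'}$ does not entail $r=r'$ (nor $r=r'$ ``on the support of $q$''), and in the presence of the semilattice structure there is no cancellation law that would license such pointwise extraction. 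So (ii) cannot be peeled off as a standalone equation, and the induction hypothesis cannot be propagated to ``the'' continuation.

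The repair is to replace the extraction by a commutation identity proved once and for all: show $\filter_{\nu}(\exec(q),\xi,\xi)=\exec(\filter_{\nu}(q,\xi,\xi))$ for arbitrary processes $q$, using the \emph{uniqueness} direction of \textbf{(co-iter)} (both sides satisfy the same guarded corecursive equation; the absorption of the doubled test $\xi$ across the two collapsed steps is exactly your sequential Hoare/purity computation). Then $\inv(p,\xi)$ gives $\filter_{\nu}(\exec(p),\xi,\xi)=\exec(\filter_{\nu}(p,\xi,\xi))=\exec(p)$, i.e.\ $\inv(\exec(p),\xi)$, and the numeric induction becomes pure substitution. This also corrects your closing remark: since $\inv(\exec(p),\xi)$ is an equation between two infinite processes, at least one genuine appeal to the uniqueness of corecursive solutions (not merely one unwinding of $\filter_{\nu}$) is unavoidable, though a single such coinductive lemma indeed suffices for all $n$.
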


\section{Worked Example: Dekker's Mutual Exclusion Algorithm}
We illustrate the use of our calculus by encoding Dekker's mutual
exclusion algorithm. This algorithm was originally presented as an
Algol program, and hence presumes some fixed imperative semantics,
while we present (and verify) a version with \emph{generic}
side-effects. We introduce the following signature symbols:
\begin{flalign*}
&&\setflag:&~2\times2\to T1,	&\setturn:&~2\to T1,&&\\
&&\getflag:&~2\to P2,			&\turnis:&~2\to P2,&&
\end{flalign*}
which can be roughly understood as interface functions accessing
variables \texttt{flag1},~\texttt{flag2} and~\texttt{turn}. This is
justified by a suitable equational axiomatization of the above
operators, which includes the following axioms (we assume $i\neq j$):
\begin{flalign*}
\letTerm{\setflag(i,b)}{\getflag(i)}   &= \letTerm{\setflag(i,b)}{\ret b}\\
\letTerm{\setflag(i,b)}{\setflag(j,c)} &= \letTerm{\setflag(j,c)}{\setflag(i,b)}\\
\letTerm{\setflag(i,b)}{\setflag(i,c)} &= \setflag(i,c)\\[1ex]
\letTerm{\setturn(b)}{\turnis(c)}      &= \letTerm{\setturn(b)}{\ret(b\Leftrightarrow c)}\\
\letTerm{\setturn(b)}{\setturn(c)} &= \setturn(c).
\end{flalign*}
(Obvious further axioms are omitted.)  The crucial part of Dekker's
algorithm is a (sub)program implementing \emph{busy waiting}. In our
case this is captured by the function $busy\_wait: 2\to T 1$ defined
as follows:
\begin{flalign*}
busy\_wait(i) = \whileTermS{\,&\getflag(\mathit{\flip}(i))}{\ifTermO{\turnis(\mathit{\flip}(i))\\&}{
\seqTerm{\step{\setflag(i,\False)}}{\await(\turnis(i))}\\&}{\step{\setflag(i,\True)}}}
\end{flalign*}
Here, we used the following shorthands: $\step{p}=\fold(\rdone p)$
denotes the one-step process defined by $p:T A$, $\mathit{\flip}:2\to 2$ is the
function swapping the coproduct components of $2=1+1$;
$(\whileTermS{b}{q})$ encodes $(\whileTerm{x\leteq\unit}{b}{q})$;
finally, $(\await{b})$ with $b$ of type $T2$, intuitively meaning
`wait until $b$', is defined by the equation:
\begin{displaymath}
\await{b} = \whileTermS{\neg b}{\ret_{\nu}\unit}.
\end{displaymath} 
Finally, we define a generic process accessing the critical section:
\begin{flalign*}
proc(i,p) = \seqTerm{&\step{\setflag(i,\True)}}{busy\_wait(i);\\ &\step{\incs(i)}; p; \step{\outcs(i)};\\ &\step{\setturn(\mathit{\flip}(i))}; \step{\setflag(i,\False)}}.
\end{flalign*}
Here we use the functions $\incs,\outcs:2\to T1$ in order to keep track of the beginning and the end of the critical section. These functions are supposed to work together with the testing function $\cs:2\to T2$ as prescribed by the axioms
\begin{flalign*}
\letTerm{\incs(i)}{\cs(i)}=&~\letTerm{\incs(i)}{\ret\True},\\
\letTerm{\outcs(i)}{\cs(i)}=&~\letTerm{\outcs(i)}{\ret\False}.
\end{flalign*}
Now the safety condition for the algorithm can be expressed by the formula
\begin{equation*}
\forall n.\,\{\neg\cs(\bar 1)\land\neg\cs(\bar 2)\}\exec^n(proc(\bar 1,p)\ll proc(\bar 2,q))\{\neg\cs(\bar 1)\lor\neg\cs(\bar 2)\}
\end{equation*}
where $\bar 1$ and $\bar 2$ are the canonical coproduct injections $\inl\unit$ and $\inr\unit$. By Lemma~\ref{lem:safety}, it suffices to show that the following formula
\begin{gather*}\label{eq:inv}
\begin{split}
&\neg\cs(\bar 1)\land\cs(\bar 2)\land\getflag(\bar 2)~\lor
\neg\cs(\bar 2)\land\cs(\bar 1)\land\getflag(\bar 1)~\lor
\neg\cs(\bar 1)\land\neg\cs(\bar 2)\phantom{~\lor}
\end{split}
\end{gather*}
is an invariant of $proc(\bar 1,p)\ll proc(\bar 2,q)$. As can be shown by definition of parallel composition, this holds iff the same formula is an invariant of both $proc(\bar 1,p)$ and $proc(\bar 2,q)$, which in turn can be shown by coinduction in~\nuM.

\section{Conclusions and further work}\LS{I need to rewrite this.}
We have studied asynchronous concurrency in a framework of generic
effects. To this end, we have combined the theories of
computational monads and final coalgebras to obtain a
framework that generalizes process algebra to encompass processes with
 side-effecting steps. We have presented a sound and complete
equational calculus for the arising \emph{concurrent metalanguage} $\nuM$,
and we have obtained a syntactic corecursion scheme in which
corecursive functions are syntactically reducible to a basic loop
construct. Within this calculus, we have given generic definitions for
standard imperative constructs and a number of standard process
operators, most notably parallel composition.

Although the proof principles developed so far are already quite
powerful, as was shown in an example verification of a generic mutual
exclusion scheme following Dekker's algorithm, we intend to develop
more expressive verification logics for side-effecting processes,
detached from equational reasoning. Initial results of this kind have
already been used in the example verification, specifically an
encoding of generic Hoare triples and an associated proof principle
for safety properties. An interesting perspective in this direction is
to identify a variant of the assume/guarantee principle for
side-effecting processes (cf.~e.g.~\cite{RoeverBoerEtAl01}). A further
topic of investigation is to develop weak notions of process
equivalence in our framework, such as testing
equivalence~\cite{NicolaHennessy83}.

Finally, the decidability status of~\nuM remains open. Note that in
case of a positive answer, all equations between functions defined by
corecursion schemes, e.g.\ process algebra identities, become
decidable. While experience suggests that even very simple calculi
that combine loop constructs with monadic effects tend to be
undecidable, the corecursion axiom as a potential source of trouble
seems rather modest, and no evident encoding of an undecidable problem
appears to be directly applicable.




\bibliographystyle{myabbrv}
\bibliography{monads}

\newpage
\appendix\allowdisplaybreaks
\section*{Appendix: Proof details.}
We justify the definition of the injections $\inj^n_i:A_i\to A_1+\cdots+A_n$ by the recursive equations:
\begin{flalign*}
&&\inj_1^1 p=p,&& \inj_1^{n+1} p=\inl p,&& \inj_{i+1}^{n+1} p=\inr\inj_i^n p.&&
\end{flalign*}
 We call a \emph{guarded corecursive equation} an equation of the form
\begin{align}\label{eq:wf_eq}\tag{$**$}
\unfold(f(x)) = \letTerm{z\leteq p}{\case{z}{\inj_1^{n} x_1\mapsto p_{1}}{\dotsc;\inj_{n}^{n} x_n\mapsto p_{n}}}
\end{align}
if $f$ does not occur in $p$ and none of the $p_i$ contains $f$ unless
${p_i\equiv\rcont f(x_i)}$ (which is, of course, well-typed only in
case $A_i=A$). The fact that the right-hand side is prefixed with the
binding $z\leteq p$ plays an important role for the expressiveness of
the scheme; a comparatively trivial point in this respect is that this
allows substituting the arguments $x_i$ in $\rcont f(x_i)$ by arbitrary
terms.
\begin{lem}[Corecursion]\label{lem:wf_eq}
  Given some appropriately typed programs $p$ and the~$p_i$ such that~\eqref{eq:wf_eq} is guarded, there is a unique function~$f$ satisfying~\eqref{eq:wf_eq} and this function is defined by an effectively computable metalanguage term.
\end{lem}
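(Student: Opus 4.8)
The plan is to reduce the corecursive equation~\eqref{eq:wf_eq} to a single coiteration that the rule \textbf{(co-iter)} can solve directly, by building a ``state space'' that remembers pending corecursive calls. Since the only way $f$ may occur on the right-hand side is in the guarded position $\rcont f(x_i)$, the unfolding behaviour of $f(a)$ depends only on $a$, so morally $f$ is an anamorphism for a coalgebra on the domain~$\sem{A}$. Concretely, I would define the coalgebra $c : A \to T(A + B)$ by
\begin{displaymath}
c(x) = \letTerm{z\leteq p}{\case{z}{\inj_1^{n} x_1\mapsto c_1}{\dotsc;\inj_n^n x_n\mapsto c_n}}
\end{displaymath}
where $c_i \equiv \rcont x_i$ if $p_i \equiv \rcont f(x_i)$, and otherwise $c_i$ is obtained from the $f$-free term $p_i : T(A+B)$ by replacing each occurrence of the implicit ``future process'' with a value of type $A$ --- but since an $f$-free $p_i$ has type $T(A+B)$ already (the right-hand side of $\unfold$ must have type $T(T_\nu A + A)$, so after the reading $T_\nu A$ corresponds to $A$ via the pending call) one takes $c_i = p_i$ read with $T_\nu A$ uniformly replaced by $A$. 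Then I claim $f := \bigl(\lambda x.\,\letLoop{y\letret x}{c}\bigr)$, i.e.\ the coiterate of $c$, satisfies~\eqref{eq:wf_eq}: applying \textbf{(co-iter)} to $f(x) = \letLoop{y\letret x}{c}$ shows $\unfold(f(x))$ equals $c$ with each $\rcont y$ turned back into $\rcont f(x_i)$ and each $\rdone y$ into $\rdone y$, which is exactly the right-hand side of~\eqref{eq:wf_eq}. Uniqueness follows the same way: any $f$ satisfying~\eqref{eq:wf_eq} makes $\unfold \circ f$ factor through the coalgebra $c$, so $f$ must be the unique coalgebra morphism, which \textbf{(co-iter)} identifies with $\letLoop{y\letret x}{c}$.

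The key steps, in order, are: (1) make precise the syntactic translation from the guarded right-hand side of~\eqref{eq:wf_eq} to the loop body $c$, in particular checking the typing --- each branch $p_i$ that is not a guarded call has type $T(A+B)$ after the substitution $T_\nu A \rightsquigarrow A$, and each guarded branch $\rcont f(x_i)$ translates to $\rcont x_i : T(A+B)$, so $c$ is a well-typed coalgebra term $A\rhd c : T(A+B)$; (2) verify that \textbf{(co-iter)} applies, i.e.\ that the hypothesis $\unfold(\letLoop{y\letret x}{c}) = \casenext{c'}{\cont y\mapsto\rcont(\letLoop{y\letret x}{c})}{\done y\mapsto\rdone y}$ holds, where $c'$ is $c$ with the relevant renaming, and that substituting $\letLoop{y\letret x}{c}$ for the placeholder in $c'$ recovers the right-hand side of~\eqref{eq:wf_eq} verbatim --- this is the core equational computation and uses the monad laws \textbf{(unit$_2$)}, \textbf{(assoc)} together with \textbf{(case\_sub)} to push the outer $\DO$ and $\case$ through; (3) read off existence of $f$ as an explicit $\nuM$-term (it literally is $\lambda x.\,\letLoop{y\letret x}{c}$, built from the $p$, $p_i$ by the translation of step~(1)); (4) derive uniqueness by showing any solution composes with $\fin{}$ to yield the coalgebra $c$, then invoking the other direction of the bidirectional rule \textbf{(co-iter)}.

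The main obstacle I expect is step~(2): getting the equational bookkeeping exactly right so that the translated body $c$, after the \textbf{(co-iter)} substitution, reproduces the original right-hand side \emph{syntactically up to $\nuM$-provable equality}. The subtlety is that $\rcont$ abbreviates $\ret\inl$ and $\casenext{\cdot}{\cdot}{\cdot}$ hides a $\DO$ followed by $\case$, so one must unfold these abbreviations, use \textbf{(assoc)} to reassociate the nested binds $\letTerm{z\leteq p}{\dots}$, and apply \textbf{(case\_sub)} to commute the substitution of $\letLoop{y\letret x}{c}$ with the $n_i$-ary case --- taking care that the side condition $x,y\notin\Vars(r)$ of \textbf{(case\_sub)} is met, which it is because the substituted term $\letLoop{\cdot}{c}$ only mentions the bound pattern variables. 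A secondary but more conceptual point is handling the \emph{mutual} case (Theorem~\ref{lem:solve} rather than Lemma~\ref{lem:wf_eq}): there one takes the coproduct state space $A_1 + \dots + A_k$ and a single coalgebra $c : A_1+\dots+A_k \to T((A_1+\dots+A_k) + (B_1+\dots+B_k))$ assembled from the $k$ equations, solves it once, and projects; I would remark that this reduces to the single-equation lemma and leave the routine index-chasing to the reader.
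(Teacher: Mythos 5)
There is a genuine gap at the very first step of your plan: the coalgebra $c : A \to T(A+B)$ you propose does not exist in general. In a guarded scheme~\eqref{eq:wf_eq} with $f : A\to T_\nu B$, each branch $p_i$ has type $T(T_\nu B + B)$, and a branch that \emph{does not contain} $f$ may still place an arbitrary process term in the left summand --- e.g.\ $\rcont(\letTermO{y\leteq q}{\ret_{\nu}\brks{x,y}})$ in the definition of interleaving, or $\rcont(W^{\neg b}_{x,q}(y))$ in the until-loop. Your translation sends a guarded branch $\rcont f(x_i)$ to $\rcont x_i$, which is fine, but for an $f$-free branch there is no value of type $A$ to put in the continue position: ``reading $p_i$ with $T_\nu B$ uniformly replaced by $A$'' is not a well-defined operation on terms, and the premise that ``the unfolding behaviour of $f(a)$ depends only on $a$, so $f$ is an anamorphism on $A$'' is false exactly in these cases. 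This extra generality is the whole point of the lemma --- it is why the scheme is strictly stronger than the plain coiteration captured by \textbf{(co-iter)}. Since both your existence and your uniqueness arguments route through $c$, both break.

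The paper closes precisely this gap by enlarging the state space: it coiterates a function $F$ on $A + T_\nu B$, where the second summand carries a pending process to be run to completion. On $\inl$-states the coalgebra performs one step of the scheme, sending guarded calls to $\inl$-states and an arbitrary process continuation $r$ to the state $\inr r$; on $\inr$-states it merely unfolds the stored process one step. One then proves $F(\inr x)=x$ by the uniqueness direction of the bidirectional rule \textbf{(co-iter)} (the restriction of $F$ to $\inr$ satisfies the defining equation of the identity), sets $f(x)=F(\inl x)$, and obtains uniqueness of $f$ by packaging any other solution $g$ into $G(z)=\case{z}{\inl x\mapsto g(x)}{\inr r\mapsto\fold(r)}$ and comparing $G$ with $F$. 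The remaining ingredients of your plan --- the equational bookkeeping via \textbf{(assoc)}, \textbf{(unit$_2$)} and \textbf{(case\_sub)}, reading off the solution as an explicit $\letLoop{\cdot}{\cdot}$ term, and reducing the mutual case of Theorem~\ref{lem:solve} to a coproduct $A_1+\dots+A_k$ of state spaces --- do match the paper, but they only become available after the state space has been extended in this way.
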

\begin{proof}
The idea is to start from a special case and successively extend generality.
\begin{enumerate}
\item[(i)] Suppose that $n=2$, $p_1\equiv\rcont f(x)$ and $p_2$ does not contain $f$. Let $A\to T_{\nu} B$ be the type profile of $f$ and let us define a function $F:A+T_{\nu} B\to T_{\nu} B$ by putting: $F(z) = \bigl(\letLoop{z\letret z}{H(z)}\bigr)$ where 
\begin{flalign*}
H(z)=\case{z}{\inl x\mapsto\bigl(&\letTerm{z\leteq p}{\case{z}{\\&
\inl x_1\mapsto\rcont(\inl x_1)}{\\&
												 \inr x_2\mapsto(\begin{array}[t]{rl}\casenext{p_2}{\cont x&\,\mapsto\rcont(\inr x)}{\\[1ex]
\done x&\,\mapsto\rdone x})}\bigr)}}{\end{array}\\
              \inr r\mapsto\bigl(&\begin{array}[t]{rl}\casenext{\unfold(r)}{\cont x&\,\mapsto\rcont (\inr x)}{\\[1ex]
																	\done x&\,\mapsto\rdone x}\bigr).\end{array}}
\end{flalign*}
Let us show that $F(\inr x)=x$. By~\textbf{(corec)}, 
\begin{flalign}\label{eq:F_H}
\begin{split}
\unfold(F(z)) = \casenext{H(z)}{\cont x\mapsto\rcont F(z)}{\done\mapsto\rdone x}.
\end{split}
\end{flalign}
Note that
\begin{flalign*}
H(\inr x) = \casenext{\unfold(x)}{\cont x\mapsto\rcont(\inr x)}{\done x\mapsto\rdone x}
\end{flalign*}
from which we conclude that
\begin{flalign*}
\unfold(F(\inr x)) = \casenext{\unfold(x)}{\cont x\mapsto\rcont F(\inr x)}{\done x\mapsto\rdone x}.
\end{flalign*}
The latter means that $(F\comp\inr)$ satisfies the same equation as the identity function. Hence, by~\textbf{(corec)} both these functions must be provably equal, i.e.\ for every $x$, $F(\inr x)=x$. Now,~\eqref{eq:F_H} can be simplified down to:
\begin{flalign*}
\unfold(F(z))=\case{z}{\inl x&\,\mapsto\bigl(
                                              \begin{array}[t]{rl}\casenext{p}{\cont x_1&\,\mapsto\rcont F(\inl(x_1))}{\\[1ex]
                                                              \done x_2&\,\mapsto p_2}\bigr)}{
                                              \end{array}
\\[-1.6ex]
           \inr r&\,\mapsto r}.
\end{flalign*}
By~\textbf{(corec)} $F$ is uniquely defined by this equation.
It can be verified by routine calculations that $f(x)=F(\inl x)$ is a solution of~\eqref{eq:wf_eq}. In order to prove uniqueness, let us assume that $g$ is some other solution of~\eqref{eq:wf_eq}. Let
\begin{flalign*}
G(z)=&\case{z}{\inl x\mapsto g(x)}{\inr r\mapsto\fold(r)}.
\end{flalign*}
Clearly, $g(x)=G(\inl x)$ and it can be shown that $G$ satisfies the equation defining $F$. Therefore $g(x)=G(\inl x)=F(\inl x)=f(x)$ and we are done. 
\item[(ii)] Let $n>1$, $p_1\equiv\rcont f(x_1)$ and for every $i>1$, $p_i$ does not contain $f$. We reduce this case to the previous one as follows. Observe that, by assumption,
\begin{equation*}\label{eq:wf_eq1}
\begin{split}
\unfold(f(x)) = \letTerm{z\leteq p}{\case{z}{\inl x_1\mapsto\rcont f(x_1)}{\inr z\mapsto q}}
\end{split}
\end{equation*}
where $q = (\case{z}{\inj_1^{n-1} x_2\mapsto p_2}{\dotsc;\inj_{n-1}^{n-1} x_{n}\mapsto p_{n}})$.  According to~(i), $f$ is uniquely definable and thus we are done.
\item[(iii)] Let for some index $k$, $p_i\equiv\rcont f(x)$ for $i\leq k$ and $p_i$ does not contain $f$ for $i>k$. If $k=1$ and $n>1$ then we arrive precisely at the situation captured by the previous clause and hence we are done. If $k=n=1$ then~\eqref{eq:wf_eq} takes the form:
\begin{displaymath}
\unfold(f(x)) = \letTerm{x\leteq p}{\rcont f(x)},
\end{displaymath}
which can be transformed to:
\begin{align*}
\unfold(f(x)) =\casenext{(\letTerm{x\leteq p}{\cont x})}{\cont x\mapsto\rcont f(x)}{\done x\mapsto\rdone x}
\end{align*}
and hence we are done by~\textbf{(corec)}.
Suppose that $k>1$, $n>1$ and proceed by induction over $k$. Let 
\begin{displaymath}
q = \case{z}{\inj_1^{n-2} x_3\mapsto p_3}{\dotsc;\inj_{n-2}^{n-2} x_{n-2}\mapsto p_{n-2}}.
\end{displaymath}
Then we have:
\begin{align*}
\unfold(f(x))=\;&\letTerm{z\leteq p}{\begin{array}[t]{rl}\case{z}{\inl x_1&\,\mapsto\rcont f(x_1)}{\\[1ex]\inr\inl x_2&\,\mapsto\rcont f(x_2);\\[1ex]\inr\inr z&\,\mapsto q}\end{array}}\\
=\;&\casenext{\begin{array}[t]{rl}\bigl(\letTerm{z\leteq p}{                                         
                                         \case{z}{
                                         \inl x_1&\,\mapsto\rcont x_1}{\\[1ex]
                                         \inr\inl x_2&\,\mapsto\rcont x_2;\\[1ex]
                                         \inr\inr z&\,\mapsto\rdone z}
                                         }\bigr)}{\end{array}\\
&\cont x\mapsto\rcont f(x)}{\done x\mapsto q}
\end{align*}
and thus we are done by induction hypothesis.
\item[(iv)] Finally, we reduce the general claim to the case captured by the previous clause as follows. First observe that if neither of the $p_i$ contains $f$ then the solution is given by the equation:
\begin{align*}
f(x) = \fold(\letTerm{z\leteq p}{\case{z}{\inj_1^{n} x_1\mapsto p_1}{\dotsc;\inj_{n}^{n} x_n\mapsto p_{n}}}).
\end{align*}
In the remaining case there should exist an index $k$ and a permutation $\sigma$ of numbers $1,\ldots,n$ such that for every $i\leq k$, $p_{\sigma(i)}\equiv\cont f(x)$ and for every $i>k$, $p_{\sigma(i)}$ does not contain $f$. By a slight abuse of notation we also use $\sigma$ as function $A_1+\ldots+A_n\to A_{\sigma(1)}+\ldots+A_{\sigma(n)}$ rearranging the components of coproducts in the obvious fashion. Then
{\samepage
\begin{flalign*}
\unfold(f(x)) =\;&\letTerm{z\leteq p}{\case{z}{\inj_1^{n} x_1\mapsto p_{1}}{\dotsc;\inj_{n}^{n} x_n\mapsto p_{n}}}\\
=\;&\letTerm{z\leteq(\letTerm{z\leteq p}{\ret\sigma(z)})}{\\&\case{z}{\inj_{\sigma(1)}^{n} x_{\sigma(1)}\mapsto p_{\sigma(1)}}{\dotsc;\inj_{\sigma(n)}^{n} x_{\sigma(n)}\mapsto p_{\sigma(n)}}}
\end{flalign*}
}
and thus we are done by~(iii).\qed
\end{enumerate}
\end{proof}
\subsection*{Proof of Theorem~\ref{lem:solve}. ~}
We will need the following slight generalisation of Lemma~\ref{lem:wf_eq}.
\begin{lemma}\label{lem:solve_gen}
  Let $f$ be a fresh functional symbol, i.e.\ $f\notin\Sigma$. Given
  appropriately typed programs $p,p_1,\dots,p_n,q_1,\dots,q_n$ such
  that for every $i$, $p_i$ either does not contain $f$ or is of the
  form $\rcont f(q_i)$, there is a unique function~$f$
  satisfying~\eqref{eq:wf_eq}.
\end{lemma}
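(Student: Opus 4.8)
The plan is to reduce Lemma~\ref{lem:solve_gen} to the already-proved Lemma~\ref{lem:wf_eq} by a purely syntactic transformation that absorbs the computation of the recursive arguments $q_i$ into the guarding prefix, exactly along the lines of the remark preceding Lemma~\ref{lem:wf_eq}. Write $A\to T_\nu B$ for the profile of $f$, let $p:T(A_1+\dots+A_n)$ so that $x_i:A_i$, and set $I=\{i\mid p_i\equiv\rcont f(q_i)\}$. Introduce a new coproduct with summands $C_i=A$ for $i\in I$ and $C_i=A_i$ for $i\notin I$; let $r_i\equiv\ret(\inj_i^n q_i)$ for $i\in I$ (well typed, since $q_i:A=C_i$) and $r_i\equiv\ret(\inj_i^n x_i)$ for $i\notin I$, and define the $f$-free program
\begin{equation*}
p'\equiv\letTerm{w\leteq p}{\case{w}{\inj_1^n x_1\mapsto r_1}{\dots;\inj_n^n x_n\mapsto r_n}}
\end{equation*}
of type $T(C_1+\dots+C_n)$. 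Finally put $p_i'\equiv\rcont f(x_i)$ for $i\in I$ --- now well typed, because $C_i=A$ --- and $p_i'\equiv p_i$ for $i\notin I$, and consider the equation
\begin{equation*}
\unfold(f(x))=\letTerm{z\leteq p'}{\case{z}{\inj_1^n x_1\mapsto p_1'}{\dots;\inj_n^n x_n\mapsto p_n'}}.\tag{$\dagger$}
\end{equation*}
Since $p$ and all the $q_i$ are $f$-free, so is $p'$, and each $p_i'$ is either $f$-free or literally $\rcont f(x_i)$ with the correct type; hence $(\dagger)$ is a guarded corecursive equation in the sense of Lemma~\ref{lem:wf_eq}, and therefore has a unique, effectively computable solution.

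It then remains to check that $(\dagger)$ and the given instance of~\eqref{eq:wf_eq} have exactly the same solutions, which is a short equational calculation in \nuM, without any coinduction. Unfolding $p'$ in the right-hand side of~$(\dagger)$ and applying \textbf{(assoc)} and then the $n$-ary form of \textbf{(case\_sub)} (obtained by iterating the binary axiom along the encoding of $\inj_i^n$, and applied after renaming bound variables apart so that no capture occurs) pushes the outer binding-and-case into every branch; concretely, the $i$-th branch of the resulting case becomes $\letTerm{z\leteq r_i}{\case{z}{\inj_1^n x_1\mapsto p_1'}{\dots;\inj_n^n x_n\mapsto p_n'}}$. By \textbf{(unit${}_2$)} and the iterated coproduct $\beta$-laws \textbf{(case\_inl)}/\textbf{(case\_inr)}, this equals $p_i'$ when $i\notin I$ (since $r_i=\ret(\inj_i^n x_i)$ and $p_i'=p_i$) and equals $p_i'[q_i/x_i]=(\rcont f(x_i))[q_i/x_i]=\rcont f(q_i)=p_i$ when $i\in I$; in both cases branch $i$ collapses to $p_i$. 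Hence the right-hand side of~$(\dagger)$ is provably equal to $\letTerm{w\leteq p}{\case{w}{\inj_1^n x_1\mapsto p_1}{\dots;\inj_n^n x_n\mapsto p_n}}$, i.e.\ to the right-hand side of~\eqref{eq:wf_eq}. Therefore $f$ solves~\eqref{eq:wf_eq} if and only if it solves~$(\dagger)$, and both existence and uniqueness are inherited from Lemma~\ref{lem:wf_eq}.

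I do not expect a genuine obstacle here; the work is essentially bookkeeping. The one point needing care is the handling of the encoded $n$-ary coproduct: namely, that case analysis on $\inj_i^n t$ reduces to the substitution $p_i'[t/x_i]$, and that \textbf{(case\_sub)} lifts componentwise to the generalized $\case$ construct. This is exactly the phenomenon flagged informally --- prefixing the right-hand side with $z\leteq p$ already makes it costless to pass arbitrary terms to recursive calls, since each $q_i$ can be routed through a fresh coproduct component supplied by the prefix.
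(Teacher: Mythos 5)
Your proposal is correct and follows essentially the same route as the paper: both absorb the computation of the recursive arguments $q_i$ into the guarding prefix by re-routing each branch through a retyped coproduct component (your $p'$ is exactly the paper's $q$, with $q_i\equiv x_i$ in the non-recursive branches), and then invoke Lemma~\ref{lem:wf_eq}. The only difference is that you spell out the equational verification that the transformed equation has the same solutions as the original, which the paper leaves implicit in the phrase ``we can rewrite''.
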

\begin{proof}
W.l.o.g.\ $q_i\equiv x$ whenever $p_i$ does not contain $f$. We can rewrite~\eqref{eq:wf_eq} to
\begin{align*}
\unfold(f(x)) = \letTerm{z\leteq q}{\case{z}{\inl_1^n x_1\mapsto r_1}{\dotsc;\inl_1^n x_n\mapsto r_n}}
\end{align*}
where $q=(\letTerm{z\leteq p}{\case{z}{\inj_1^n x_1\mapsto\ret\inj_1^n q_1}{\dotsc;\inj_1^n x_n\mapsto\ret\inj_1^n q_n}})$, $r_i=\rcont(f(x))$ if $p_i\equiv\rcont(f(q_i))$ and $r_i=p_i$ otherwise. Now we are done by Lemma~\ref{lem:wf_eq}.
\qed\end{proof}
W.l.o.g.\ $n_1=\ldots=n_k$: otherwise we replace every $p_i$ by
\begin{align*}
\letTerm{z\leteq p_i}{\case{z}{\inj_1^{n_i} x_1\mapsto\inj_1^{n} x_1}{\dotsc;\inj_{n_i}^{n_i} x_{n_i}\mapsto\inj_{n_i}^{n} x_n}}
\end{align*}
where $n=\max_i n_i$ and complete the case in~\eqref{eq:corec} arbitrary to match the typing. Observe that if the result types of $f_i$ do not coincide,~\eqref{eq:corec} falls in two mutually independent parts: once the result types of $f_i$, $f_k$ are distinct, $f_k$ can not appear in the definition of $f_i$ and vice versa. Therefore, in the remainder we assume w.l.o.g.\ that $f_i$ has type $A_i\to A$. 
Consider the following corecursive definition:
\begin{align*}
\unfold(F(y)) =  \letTerm{v\leteq q}{&\case{v}{\\\inj_1^k z&\mapsto(\case{z}{\inj_1^n x_1\mapsto q_{1}^1}{\dotsc;\inj_{n}^{n} x_n\mapsto q_{n}^1})}{\\
&\vdots\\
                                                     \inj_k^k z&\mapsto(\case{z}{\inj_1^n x_1\mapsto q_{1}^k}{\dotsc;\inj_{n}^{n} x_n\mapsto q_{n}^k})}}
\end{align*}
where
\begin{flalign*}
q=\case{y}{\inj^k_1 x_1&\mapsto(\letTerm{z\leteq p_i}{\ret\inj^k_1 z})}{\dotsc;
\inj_k^k x_k\mapsto(\letTerm{z\leteq p_k}{\ret\inj^k_k z})}
\end{flalign*}
and the $q_j^i$ are defined as follows:
$q_j^i=\rcont F(\inj_m^k x_j)$ if $p_j^i\equiv\rcont f_m(x_j)$ and $q_j^i=p_j^i$ otherwise. By Lemma~\ref{lem:solve_gen}, it uniquely defines $F$. It is easy to calculate that by taking $f_i(x)=F(\inj_i^k(x))$ we obtain a solution of~\eqref{eq:corec}. Let us show it is also unique. Suppose that $g_i$ is another solution. Then $G$ defined by the equation 
\begin{align*}
G(y) = \case{y}{\inj_1^k x\mapsto g_1(x)}{\dotsc;\inj_k^k x\mapsto g_k(x)}
\end{align*}
is easily seen to satisfy the same corecursive scheme as $F$ and thus $F=G$. Therefore, for every $i$, $g_i(x)=G(\inj_i^k x)=F(\inj_i^k x)=f_i(x)$ and we are done.
\qed

\subsection*{Proof of Theorem~\ref{thm:rec_compl}. ~}
\textit{Soundness.} We only establish soundness of the rule~\textbf{(corec)} since the remainder is more or less standard.
Let $g=\sem{\Gamma,x:A\rhd q:T(A+B)}$, $h=\sem{\Gamma,x:A\rhd p:T_{\nu} B}$ and $f=T(\mathit{dist})\comp\tau\brks{\pi_1,g}$. Then the top of~\textbf{(corec)} can be rewritten to
\begin{equation}\label{eq:corec_proof1}
h=R\pi_2\comp\ana{f}.
\end{equation}
Let us show that the bottom of~\textbf{(corec)} can be rewritten to
%
\begin{equation}\label{eq:corec_proof2}
\fin{\sem{B}}\comp h = T(h+\pi_2)\comp f.
\end{equation}
Indeed, we have:
\begin{flalign*}
\fin{\sem{B}}\comp h
=\;&\sem{\unfold(p)}\\
=\;&\sem{\letTerm{z\leteq q}{\ \case{z}{\inl x\mapsto\ret\inl p}{\inr y\mapsto\ret\inr y}}}\\
=\;&T\bigl((h\comp(\pi_1\pi_1\times\id)+\pi_2)\comp \mathit{dist}\comp\brks{\id,\pi_2}\bigr)\comp\tau\brks{\id,g}\\
=\;& T\bigl((h+\pi_2)\comp(\pi_1\pi_1\times\id+\pi_1\pi_1\times\id)\comp \mathit{dist}\comp\brks{\id,\pi_2}\bigr)\comp\tau\brks{\id,g}\\
=\;& T\bigl((h+\pi_2)\comp \mathit{dist}\comp (\pi_1\pi_1\times\id)\comp\brks{\id,\pi_2}\bigr)\comp\tau\brks{\id,g}\\
=\;& T\bigl((h+\pi_2)\comp \mathit{dist}\comp (\pi_1\times\id)\bigr)\comp\tau\brks{\id,g}\\
=\;& T(h+\pi_2)\comp T(\mathit{dist})\comp\tau\brks{\pi_1,g}\\
=\;& T(h+\pi_2)\comp f
\end{flalign*}

In order to complete the proof, we are left to establish equivalence of \eqref{eq:corec_proof1} and \eqref{eq:corec_proof2}. The proof of the implication~\eqref{eq:corec_proof1} $\impl$ \eqref{eq:corec_proof2} is as follows:
\begin{flalign*}
\fin{\sem{B}}\comp h
=\;&\fin{\sem{B}}\comp R\pi_2\comp\ana{f}\\
=\;& T(R\pi_2+\pi_2)\comp\fin{\sem{\Gamma}\times\sem{B}}\comp\ana{f}\\
=\;& T(R\pi_2+\pi_2)\comp T(\ana{f}+\id)\comp T(\mathit{dist})\comp\tau\brks{\pi_1,g}\\
=\;& T(R\pi_2\comp\ana{f}+\pi_2)\comp T(\mathit{dist})\comp\tau\brks{\pi_1,g}\\
=\;& T(h+\pi_2)\comp T(\mathit{dist})\comp\tau\brks{\pi_1,g}\\
=\;& T(h+\pi_2)\comp f.
\end{flalign*}
In order to prove \eqref{eq:corec_proof2}, $\impl$ \eqref{eq:corec_proof1} let us assume~\eqref{eq:corec_proof2}. Observe that we can equivalently present the latter as $\fin{\sem{B}}\comp h = T(h+\id)\comp w$ where $w=T(\id+\pi_2)\comp f$. I.e.\ $h$ satisfies the equation, which characterises $\ana{f}$ and thus $h=\ana{w}$. We are left to show that $R\pi_2\comp\ana{f}$ also satisfies this equation. Indeed: 
\begin{flalign*}
\fin{\sem{B}}\comp &R\pi_2\comp\ana{f}\\
=\;& T(R\pi_2+\pi_2)\comp\fin{\sem{\Gamma}\times\sem{B}}\comp\ana{f}\\
=\;& T(R\pi_2+\pi_2)\comp T(\ana{f}+\id)\comp T(\mathit{dist})\comp\tau\brks{\pi_1,g}\\
=\;&T(R\pi_2\comp\ana{f}+\pi_2)\comp T(\mathit{dist})\comp\tau\brks{\pi_1,g}\\
=\;&T(R\pi_2\comp\ana{f}+\pi_2)\comp f\\
=\;&T(R\pi_2\comp\ana{f}+\id)\comp T(\id+\pi_2)\comp f\\
=\;&T(R\pi_2\comp\ana{f}+\id)\comp w.
\end{flalign*}
We have thus $h=\ana{w}=R\pi_2\comp\ana{f}$ and the proof is completed.

\textit{Completeness.} By term model construction. \qed

\subsection*{Proof of Lemma~\ref{lem:hoare}. ~}
Suppose that $\{\phi\}p\{\psi\}$. Then we have:
\begin{flalign*}
\letTerm{x\leteq&\,\phi;y\leteq p;z\leteq\psi}{\ret\brks{x,y,z,x\impl z}}\\
=\,&\letTerm{x\leteq\phi;y\leteq\filter(\phi,p,\psi);z\leteq\psi}{\ret\brks{x,y,z,x\impl z}}\\
=\,&\letTerm{x\leteq\phi;y\leteq(\letTerm{x'\leteq\phi;y'\leteq p;z'\leteq\psi}{\\&\ifTerm{(x'\impl z')}{\ret y'}{\nil}});z\leteq\psi}{\ret\brks{x,y,z,x\impl z}}\\
=\,&\letTerm{x\leteq\phi;x'\leteq\phi;y'\leteq p;z'\leteq\psi}{\\&\ifTerm{(x'\impl z')}{\letTerm{y\leteq\ret y';z\leteq\psi}{\ret\brks{x,y,z,x\impl z}}\\&\hspace{11.6ex}}
{\letTerm{y\leteq\nil;z\leteq\psi}{\ret\brks{x,y,z,x\impl z}}}}\\
=\,&\letTerm{x\leteq\phi;y\leteq p;z\leteq\psi}{\ifTerm{(x\impl z)}{\ret\brks{x,y,z,x\impl z}}{\nil}}\\
=\,&\letTerm{x\leteq\phi;y\leteq p;z\leteq\psi}{\ifTerm{(x\impl z)}{\ret\brks{x,y,z,\True}}{\nil}}\\
=\,&\letTerm{x\leteq\phi;y\leteq p;z\leteq\psi}{\ret\brks{x,y,z,\True}}
\end{flalign*}
On the other hand, provided the equation
\begin{align}\label{eq:hoare}
\begin{split}
&\letTerm{x\leteq\phi;y\leteq p;z\leteq\psi}{\ret\brks{x,y,z,x\impl z}}=\\
&\letTerm{x\leteq\phi;y\leteq p;z\leteq\psi}{\ret\brks{x,y,z,\True}},
\end{split}
\end{align}
we have:
\begin{flalign*}
\filter(&\phi,p,\psi)\\
=\,&\letTerm{x\leteq\phi;y\leteq p;z\leteq\psi}{\ifTerm{(x\impl z)}{\ret y}{\nil}}\\
=\,&\letTerm{\brks{x,y,z,v}\leteq(\letTerm{x\leteq\phi;y\leteq p;z\leteq\psi}{\ret(x,y,z,x\impl z)})}{\\&\ifTerm{v}{\ret y}{\nil}}\\
=\,&\letTerm{\brks{x,y,z,v}\leteq(\letTerm{x\leteq\phi;y\leteq p;z\leteq\psi}{\ret(x,y,z,\True)})}{\\&\ifTerm{v}{\ret y}{\nil}}\\
=\,&\letTerm{x\leteq\phi;y\leteq p;z\leteq\psi}{\ifTerm{\True}{\ret y}{\nil}}\\
=\,&p.
\end{flalign*}
By definition, this means validity of the Hoare triple $\{\phi\}p\{\psi\}$.\qed
\end{document}

